\newtheorem{defn}{Definition}
\newtheorem{sdefn}{Supplemental Definition}
\newtheorem{thm}[defn]{Theorem}
\newtheorem{sthm}[sdefn]{Supplemental Theorem}
\newtheorem{prop}[defn]{Proposition}
\newtheorem{sprop}[sdefn]{Supplemental Proposition}
\newtheorem{cor}[defn]{Corollary}
\newtheorem{slem}[sdefn]{Supplemental Lemma}
\theoremstyle{definition}
\newtheorem*{rem*}{Remark}
\newenvironment{customthm}[1]
  {\innercustomthm}
  {\endinnercustomthm}
\newenvironment{customprop}[1]
  {\innercustomprop}
  {\endinnercustomprop}
\newtheorem{ex}[defn]{Example}
\newcommand{\R}{\mathbb{R}}
\newcommand{\N}{\mathbb{N}}
\newcommand{\C}{\mathbb{C}}
\newcommand{\ket}[1]{| #1 \rangle}
\newcommand{\bra}[1]{\langle #1 |}
\newcommand{\braket}[2]{\left\langle #1 \mid #2 \right\rangle}
\newcommand{\ketbra}[2]{\left|#1\right\rangle\!\!\left\langle #2\right|}
\newcommand{\tr}{\mathrm{Tr}}
\newcommand{\etal}{{\it{et al.}}}
\newcommand{\proj}[1]{\ensuremath{|#1\rangle \langle #1|}}
\renewcommand{\rho}{\varrho}
\newcommand{\D}{\mathrm{d}}
\newcommand{\supp}{\mathrm{supp}}
\newcommand{\spec}{\mathrm{spec}}
\newcommand{\Hi}{\mathcal{H}}
\newcommand{\St}{\mathbb{S}}
\newcommand{\hi}{\Hi}
\newcommand{\End}[1]{\mathrm{End}\left(#1\right)}
\newcommand{\eps}{\varepsilon}
\newcommand{\relent}[4]{D_{#1}\left( #3 #2\| #4\right)}
\newcommand{\hmin}{H_{\min}}
\pgfmathsetmacro{\rad}{.7}
\begin{document}

\title{Catalytic Decoupling of Quantum Information}
\date{\today}

\author{Christian Majenz}
\email{majenz@math.ku.dk}
\affiliation{Department of Mathematical Sciences, University of Copenhagen, Universitetsparken 5, DK-2100 Copenhagen Ø.}
\author{Mario Berta}
\affiliation{Institute for Quantum Information and Matter, California Institute of Technology, Pasadena, CA 91125, USA.}
\author{Fr\'ed\'eric Dupuis}
\affiliation{Faculty of Informatics, Masaryk University, Brno, Czech Republic.}
\author{Renato Renner}
\affiliation{Institute for Theoretical Physics, ETH Zurich, 8093 Z\"urich, Switzerland.}
\author{Matthias Christandl}
\affiliation{Department of Mathematical Sciences, University of Copenhagen, Universitetsparken 5, DK-2100 Copenhagen Ø.}

\begin{abstract}
The decoupling technique is a fundamental tool in quantum information theory with applications ranging from quantum thermodynamics to quantum many body physics to the study of black hole radiation. In this work we introduce the notion of catalytic decoupling, that is, decoupling in the presence of an uncorrelated ancilla system. This removes a restriction on the standard notion of decoupling, which becomes important for structureless resources, and yields a tight characterization in terms of the max-mutual information. Catalytic decoupling naturally unifies various tasks like the erasure of correlations and quantum state merging, and leads to a resource theory of decoupling.
\end{abstract}

\maketitle

%-----------------------------------------------------------------------------------------------------------------------------------------------------------------------------------------------------------------------------

\paragraph{Introduction.}

Erasing correlations between quantum systems via local operations, decoupling, is a task that was first studied in the context of quantum information theory~\cite{Horodecki05} (see~\cite{Hayden11} for an introductory tutorial). In particular, decoupling has been crucial for understanding how to distribute quantum information between different parties~\cite{horodecki2007quantum,abeyesinghe2009mother,Luo09,yard2009optimal,Devetak08} and for understanding how to send quantum information over noisy quantum channels~\cite{HHWY08,fred-these,Bennett06,berta2011quantum}. In that context, the idea of decoupling has also been made use of in quantum cryptography~\cite{Berta14}. The concept is, however, also very useful in physics (as, e.g., outlined in~\cite{dupuis2014one}). Applications range from quantum thermodynamics~\cite{RARDV10,Aberg13,diploma-hutter,chaves2012entropic,Brandao12}, to the study of black hole radiation~\cite{HayPre07,braunstein-zyczkowski,braunstein-pati}, and solid state physics~\cite{brandao2011faithful}.

%-----------------------------------------------------------------------------------------------------------------------------------------------------------------------------------------------------------------------------

\paragraph{Standard decoupling.}

The basic idea behind decoupling is the following: if a mixed bipartite quantum state $\rho_{AE}$ is only weakly correlated, then it should suffice to erase a small part of $A$ to approximately decouple $A$ from $E$, i.e., to get an approximate product state (see Figure \ref{fig:schematic}). More precisely, we say that a bipartite quantum state $\rho_{AE}$ is $\eps$-decoupled by the partial trace map $T_{A\to A_1}(\cdot)=\tr_{A_2}[\cdot]$ with $A=A_1A_2$ if there exists an unitary operation $U_A$ such that,
\begin{align}\label{eq:decoupling_original}
\min_{\omega_{A_1}\otimes\omega_E}P\big(\mathcal{T}_{A\to A_1}(U_A\rho_{AE}U_A^\dagger),\omega_{A_1}\otimes\omega_E\big)\le \eps,
\end{align}
where the minimum is over all product quantum states $\omega_{A_1}\otimes\omega_E$, and $P(\beta,\gamma):=\left(1-\|\sqrt{\beta}\sqrt{\gamma}\|_1^2\right)^{1/2}$ denotes the purified distance~\cite{mybook}. The $A_1$-system is called the decoupled system and the $A_2$-system the remainder system. Now, the fundamental question that we want to discuss is how large we have to choose the remainder system $A_2$ in order to achieve $\eps$-decoupling. We denote the minimal remainder system size, i.e., the logarithm of the minimal remainder system dimension, for $\eps$-decoupling $A$ from $E$ in a state $\rho_{AE}$ by $R^\eps(A;E)_\rho$.

%-----------------------------------------------------------------------------------------------------------------------------------------------------------------------------------------------------------------------------

\paragraph{Converse.}

We first show quite naturally that $R^\eps(A;E)_\rho$ has to be at least of the size of the smooth max-mutual information $I_{\max}^{\eps}(E:A)_{\rho}$ present in the initial state $\rho_{AE}$. This measure is defined as~\cite{berta2011quantum},
\begin{align}
I_{\max}^{\eps}(E;A)_{\rho}&:=\min_{\bar{\rho}}I_{\max}(E;A)_{\bar{\rho}}\quad\mathrm{with}\label{eq:Imax1}\\
I_{\max}(E;A)_{\bar{\rho}}&:=\min_{\sigma_A}\min\left\{\lambda\in\mathbb{R}\middle|2^\lambda\cdot\sigma_A\otimes\bar{\rho}_E\geq\bar{\rho}_{AE}\right\},\label{eq:Imax2}
\end{align}
where the minimum in~\eqref{eq:Imax1} is over all bipartite quantum states with $P(\rho_{AE},\bar{\rho}_{AE})\leq\eps$~\footnote{More precisely, this minimum is taken over sub-normalized states, see supplemental material.}, and the minimum in~\eqref{eq:Imax2} is over all quantum states $\sigma_A$. We note that the definition of the smooth max-information is a priori not symmetric in $A:E$. However, we have~\cite{ciganovic2014smooth},
\begin{align}
I_{\max}^{\eps}(E;A)_{\rho}=_\varepsilon I_{\max}^{\eps}(A;E)_{\rho},
\end{align}
where $=_\varepsilon$ stands for equality up to terms $\mathcal{O}(\log(1/\eps))$. For the converse we exploit that the smooth max-mutual information is invariant under local unitary operations and that it has the so-called non-locking property (see~\cite{divincenzo04} about information locking). That is, just like the quantum mutual information it fulfills the inequality~\cite[Lemma B.12]{berta2011quantum},
\begin{align}\label{eq:lemma_max}
I^{\eps}_{\max}(E;A_1A_2)_\rho\le I^{\eps}_{\max}(E;A_1)_\rho+2\log|A_2|,
\end{align}
where $|A_2|$ denotes the dimension of $A_2$. Since the final state is a product state, its smooth max-mutual information $I^{\eps}_{\max}(E;A_1)_{\omega\otimes\omega}$ becomes zero. This means that in order to erase the initial correlations $I^{\eps}_{\max}(E;A)_\rho$ we need at least a remainder system of size~\footnote{For the definition~\eqref{eq:Imax2}, it is convention in the literature to write $E;A$ (and not $A;E$).},
\begin{align}\label{eq:converse_Imax}
R^\eps(A;E)_\rho\ge \frac 1 2 I_{\max}^{\eps}(E;A)_\rho.
\end{align}

%-----------------------------------------------------------------------------------------------------------------------------------------------------------------------------------------------------------------------------

\paragraph{Previous works.}

Most of the aforementioned decoupling references only give good achievability bounds for states of the form $\rho_{A^nE^n}=\rho_{AE}^{\otimes n}$ in the asymptotic limit $n\to\infty$. Whereas this setting is relevant in quantum Shannon theory, it is often a severe restriction for applications in physics. For typical physical situations (e.g., in thermodynamics), there is usually not even a natural decomposition of a large system in $n$ subsystems. A notable exception concerning achievability results is reference~\cite{dupuis2014one}, where the authors show that
\begin{align}\label{eq:achiev_Hmin}
R^{\eps}(A;E)_\rho\leq_\varepsilon \frac{1}{2}\Big(H^{\eps'}_{\max}(A)_\rho-H^{\eps'}_{\min}(A|E)_\rho\Big)\,\mathrm{with}\,\eps'=\frac{\eps}{5},
\end{align}
where $\leq_\varepsilon$ means up to terms $\mathcal{O}(\log(1/\eps))$. (We give a proof of this particular statement in the supplemental material). Here, $H^\eps_{\max}$ and $H^{\eps}_{\min}$ denote the smooth conditional max- and min-entropy whose exact definitions can be found in the supplemental material (or see the textbook~\cite{mybook}). In fact, the results from~\cite{dupuis2014one} also show that not only decoupling in the sense of~\eqref{eq:decoupling_original} is achieved, but moreover that the decoupled system is also randomized. That is, there exists a quantum state $\omega_E$ and a unitary operation $U_A$ such that the decoupled system is left in the fully mixed state:
\begin{align}
P\left(\mathcal{T}_{A\to A_1}(U_A\rho_{AE}U_A^\dagger),\frac{1_{A_1}}{|A_1|}\otimes\omega_E\right)\le \eps.
\end{align}
However, it turns out that there can be an arbitrary big gap between the converse~\eqref{eq:converse_Imax} and the achievability result~\eqref{eq:achiev_Hmin}. This is best seen for an example with trivial system $E$. In that case the achievability bound~\eqref{eq:achiev_Hmin} reduces to the difference between the smooth max- and min-entropy and it is known that this can become roughly as big as $\log|A|$ (we provide an explicit example in the supplemental material). In order to achieve the converse from~\eqref{eq:converse_Imax} we propose in the following a generalized notion of decoupling.

%-----------------------------------------------------------------------------------------------------------------------------------------------------------------------------------------------------------------------------

\paragraph{Catalytic decoupling.}

A natural question to ask at this point is if decoupling can be achieved more efficiently in the presence of an already uncorrelated ancilla system (see Figure \ref{fig:schematic}). Formally, we say that $\eps$-decoupling can be achieved catalytically for a bipartite quantum state $\rho_{AE}$ if there exists an ancilla state $\rho_{A'}$ and a decomposition $AA'\cong A_1A_2$ such that
\begin{align}\label{eq:catalytic_decoupling}
\min_{\omega_{A_1}\otimes\omega_E}P\big(\rho_{A_1E},\omega_{A_1}\otimes\omega_E\big)\le \eps\quad\mathrm{where}&\\
\rho_{A_1A_2E}=\rho_{AA'E}=\rho_{AE}\otimes \rho_{A'}&.
\end{align}
Again, we call the $A_1$-system the decoupled system and the $A_2$-system the remainder system. The term catalytic means that the share of the initially uncorrelated ancilla system $A'$ that becomes part of the decoupled system $A_1$ stays decoupled (see Figure \ref{fig:schematic}).

Now, we are interested in the minimal size of the remainder system $A_2$ in order to achieve $\eps$-decoupling catalytically. We denote the optimal remainder system size for catalytically decoupling $A$ from $E$ in a state $\rho_{AE}$ by $R^\eps_c(A;E)_\rho$. Clearly, we have $R^\eps_c(A;E)_\rho\le R^\eps(A;E)_\rho$, as we can always choose a trivial ancilla. Moreover, since appending with an ancilla does not increase the smooth max-mutual information (see supplemental material), the same converse as in~\eqref{eq:converse_Imax} still holds.

These concepts can naturally be phrased as a resource theory of decoupling. A quantum system $A$ coupled to the environment $E$ can yield a decoupled system $A_1$ of a certain size through standard decoupling. That is, in the resource theory language of~\cite{devetak2008resource} we have $\langle\rho_{AE}\rangle\ge_\varepsilon (\log|A|-R^\varepsilon(A;E)_\rho)[d]$. Here, $[d]$ denotes a decoupled qbit and $\ge_\varepsilon$ stands for up to error $\varepsilon$ (see also~\cite{datta11}). Now, our novel paradigm makes use of of the possibility that if we already have decoupled qbits at hand, then we might be able to decouple a larger system,
\begin{align}
\langle\rho_{AE}\rangle+n[d]\ge_\varepsilon \left(n+\log|A|-R^\varepsilon_c(A;E)_\rho\right)[d]&\notag\\
\text{for $n$ large enough.}&
\end{align}

%-----------------------------------------------------------------------------------------------------------------------------------------------------------------------------------------------------------------------------

\paragraph{Achievability.}

\begin{figure*}
\includegraphics[width=\textwidth]{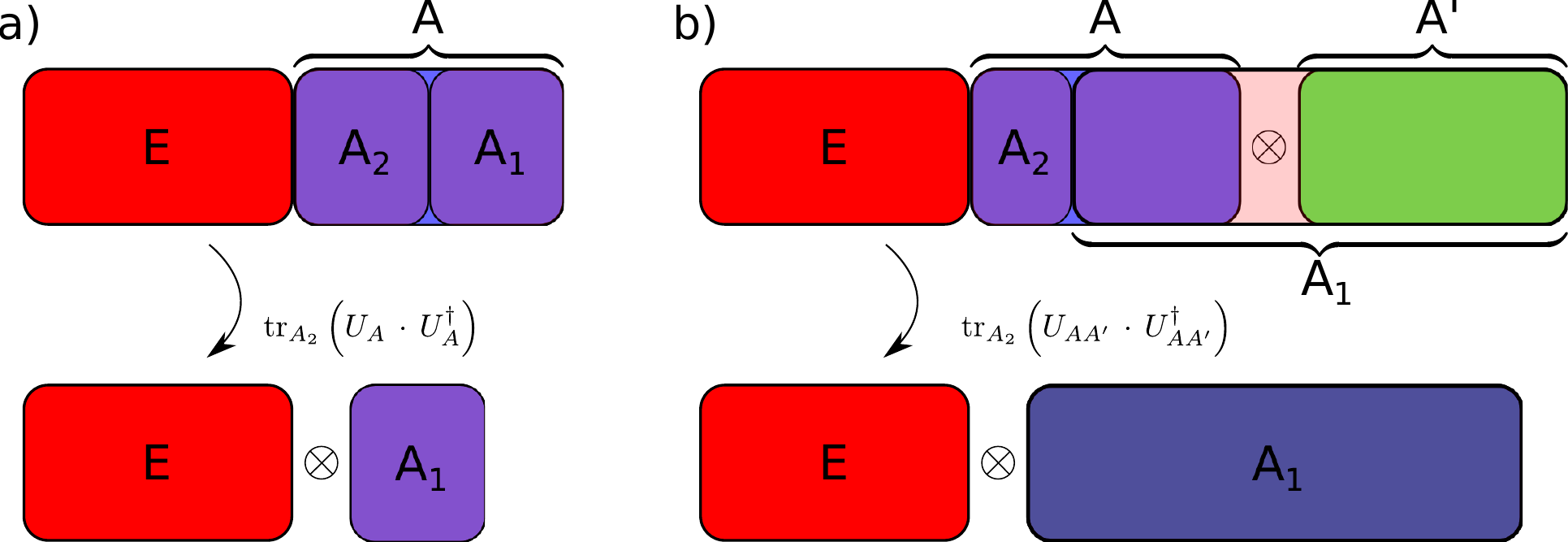}
\caption{Schematic representation of a) standard and b) catalytic decoupling: tracing out a system $A_2$ leaves the remaining state decoupled. While there is no ancilla for standard decoupling as in a), catalytic decoupling as in b) allows to make use of an additional, already decoupled system $A'$. The basic question is how large we have to choose the system $A_2$ such that the remaining system $A_1$ is decoupled from $E$.}\label{fig:schematic}
\end{figure*}

In contrast to standard decoupling as in~\eqref{eq:decoupling_original}, catalytic decoupling can be achieved with a remainder system size that is essentially equal to the smooth max-mutual information.

\begin{thm}[Catalytic decoupling]\label{thm:anc-decoup}
For any bipartite quantum state $\rho_{AE}$ and $0<\delta\leq\eps\leq1$ we have:
\begin{align}\label{eq:decoupcond}
R^\eps_c(A;E)_\rho\lesssim\frac 1 2 I_{\max}^{\eps-\delta}(E;A)_{\rho}
\end{align}
where $\lesssim$ stands for smaller or equal up to terms $\mathcal{O}(\log\log|A|+\log(1/\delta))$. We also have the converse
\begin{align}\label{eq:converse_catalytic}
R^\eps_c(A;E)_\rho\ge \frac 1 2 I_{\max}^{\eps}(E:A)_\rho.
\end{align}
\end{thm}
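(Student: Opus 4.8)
The converse \eqref{eq:converse_catalytic} is the easy half, and I would dispatch it first by mimicking the argument already given for \eqref{eq:converse_Imax}. Given any catalytic $\eps$‑decoupling — an ancilla state $\rho_{A'}$, an isomorphism $AA'\cong A_1A_2$ realised by a unitary $U$, and a product state with $P(\rho_{A_1E},\omega_{A_1}\otimes\omega_E)\le\eps$ — I would chain four facts: (i) $I_{\max}^{\eps}(E;A)_\rho=I_{\max}^{\eps}(E;AA')_{\rho_{AE}\otimes\rho_{A'}}$, since appending an uncorrelated ancilla cannot decrease the smooth max‑mutual information (data processing under $\tr_{A'}$) nor increase it (the monotonicity statement quoted in the main text); (ii) invariance of $I_{\max}^{\eps}$ under the local unitary $U$, so $I_{\max}^{\eps}(E;AA')_{\rho_{AE}\otimes\rho_{A'}}=I_{\max}^{\eps}(E;A_1A_2)_{\rho_{A_1A_2E}}$; (iii) the non‑locking inequality \eqref{eq:lemma_max}, $I^{\eps}_{\max}(E;A_1A_2)_{\rho_{A_1A_2E}}\le I^{\eps}_{\max}(E;A_1)_{\rho_{A_1E}}+2\log|A_2|$; and (iv) $I^{\eps}_{\max}(E;A_1)_{\rho_{A_1E}}\le 0$, because $\omega_{A_1}\otimes\omega_E$ is admissible in the smoothing and has vanishing max‑mutual information. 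Combining these gives $\log|A_2|\ge\tfrac12 I_{\max}^{\eps}(E:A)_\rho$.

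For the achievability \eqref{eq:decoupcond} the plan is to combine smoothing, a ``flattening of the reference state'' step (where the ancilla earns its keep), and the decoupling theorem of~\cite{dupuis2014one} applied block by block. First pick $\bar\rho_{AE}$ with $P(\rho_{AE},\bar\rho_{AE})\le\eps-\delta$ attaining $I_{\max}(E;A)_{\bar\rho}=I_{\max}^{\eps-\delta}(E;A)_\rho=:\lambda$, and fix $\sigma_A$ with $\bar\rho_{AE}\le 2^\lambda\,\sigma_A\otimes\bar\rho_E$. After discarding the part of $\bar\rho$ supported on $\sigma_A$‑eigenvalues below $\delta\,2^{-\lambda}/|A|$ — this costs $O(\sqrt\delta)$ in purified distance because $\bar\rho_A\le 2^\lambda\sigma_A$ bounds the discarded weight — I would partition the remaining spectrum of $\sigma_A$ into $N=O(\log(|A|/\delta))$ dyadic eigenspaces $\Pi_k$ on each of which $\sigma_A$ is flat up to a factor two. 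Localising the operator inequality, $\Pi_k\bar\rho_{AE}\Pi_k\le 2^{\lambda-k}\,\Pi_k\otimes\bar\rho_E$, so the normalised block state $\rho^{(k)}_{AE}\propto\Pi_k\bar\rho_{AE}\Pi_k$, with weight $q_k=\tr\Pi_k\bar\rho_A$ and dimension $d_k=\mathrm{rank}\,\Pi_k$ satisfying $\log d_k=k+\log\tr(\Pi_k\sigma_A)+O(1)$, obeys $H_{\min}(A|E)_{\rho^{(k)}}\ge k-\lambda+\log q_k$.

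The heart of the argument is then to decouple each block and glue the pieces into a single catalytic decoupling. I would discard all ``light'' blocks ($q_k<\delta/N$, total weight $<\delta$), and collect into one subsystem all ``bad'' blocks, namely those whose $\rho^{(k)}_{AE}$ is too close to maximally entangled with $E$ for a decoupled system of a fixed common size to survive; using the interplay of $d_k\le 2^{k+1}$ (dyadic) with the correlation bound $d_k\le 2^{\lambda-k}N/\delta$ forced for such blocks, their total dimension is only $2^{\lambda/2+O(\log N+\log(1/\delta))}$, so this subsystem can simply be placed into the remainder. On each remaining (``good'') block, since $\tfrac12\bigl(\log d_k+H_{\min}(A|E)_{\rho^{(k)}}\bigr)\ge L$ for a common value $L=\log|A|-\tfrac12\lambda-O(\log N+\log(1/\delta))$, the randomising version of the decoupling theorem produces a unitary $U_k$ on the block with $\tr_{A_2^{(k)}}\!\bigl(U_k\rho^{(k)}_{AE}U_k^\dagger\bigr)\approx_\delta \tfrac{\openone_{A_1}}{|A_1|}\otimes\rho_E^{(k)}$, where $A_1\cong\C^{2^L}$ and $\log|A_2^{(k)}|=\log d_k-L$. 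Applying $U=\bigoplus_k U_k$ and tracing out $A_2:=\bigl(\bigoplus_k A_2^{(k)}\bigr)\oplus(\mathrm{bad}\oplus\mathrm{light})$ automatically kills the inter‑block coherences, because the $A_2^{(k)}$ are mutually orthogonal subspaces, and leaves $\sum_k q_k\,\tfrac{\openone_{A_1}}{|A_1|}\otimes\rho_E^{(k)}=\tfrac{\openone_{A_1}}{|A_1|}\otimes\bar\rho_E$, a product state; one checks $\log|A_2|\le\tfrac12\lambda+O(\log\log|A|+\log(1/\delta))$. A tiny ancilla $\rho_{A'}$ of that same order is appended only to re‑express the awkward direct‑sum decomposition of $A$ as an honest tensor factorisation $AA'\cong A_1\otimes A_2$ with the stated dimensions — this arithmetic rounding is precisely where standard decoupling is genuinely constrained and catalysis is used.

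For the error budget, the smoothing accounts for $\eps-\delta$, while the truncations of light and bad blocks together with the $N$ block‑wise decoupling errors contribute $O(\sqrt\delta)$ after a convexity step, so the constants can be chosen to make the total purified distance $\le\eps$; the overhead beyond $\tfrac12\lambda=\tfrac12 I_{\max}^{\eps-\delta}(E;A)_\rho$ collects the $\log N$ from the direct sum, the $\log(N/\delta)$ from the light‑block threshold, the $\tfrac12\lambda+O(\log N+\log(1/\delta))$ bound on the bad‑block dimension, and the $O(\log(1/\delta))$ corrections internal to the decoupling theorem — all of the advertised size $O(\log\log|A|+\log(1/\delta))$. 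The step I expect to be most delicate is the bad‑block bookkeeping: one must define ``bad'' so that simultaneously (a) every non‑bad block admits the common decoupled dimension $L$, and (b) the bad blocks' total dimension stays at the $2^{\lambda/2}$ scale rather than $2^{\lambda}$ — this factor of two is exactly what separates the sharp $\tfrac12 I_{\max}$ target from the weaker $I_{\max}$ bound that a cruder (e.g.\ convex‑split–type) argument would yield, and extracting it relies on combining the dyadic dimension bound with the correlation bound above.
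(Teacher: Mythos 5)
Your converse argument is fine and is essentially the paper's: append the ancilla (leaves $I_{\max}^{\eps}$ unchanged), use unitary invariance, the non-locking bound, and the fact that the $\eps$-decoupled state witnesses $I_{\max}^{\eps}(E;A_1)\le 0$. The gap is in the achievability, precisely at the step you flag as delicate. Your gluing requires every branch of non-negligible weight to leave the \emph{same} $A_1$-marginal, namely $\openone_{A_1}/|A_1|$ with $|A_1|=2^{L}$ and $L\approx\log|A|-\tfrac{1}{2}\lambda$, but the amount of mixedness a block can contribute to $A_1$ is branch-dependent, and blocks that are both heavy and mixedness-poor occur even when $I_{\max}$ is tiny. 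Concretely, take $\rho_{AE}=\tfrac{1}{2}\proj{0}_A\otimes\rho_E^{(0)}+\tfrac{1}{2}\tfrac{\openone_A-\proj{0}_A}{|A|-1}\otimes\rho_E^{(1)}$ with $\rho_E^{(0)}\perp\rho_E^{(1)}$: here $I_{\max}^{\eps}(E;A)_\rho\le 1$, yet the weight-$\tfrac{1}{2}$ one-dimensional block can neither supply $L$ maximally mixed qubits nor be discarded. Your ``bad block'' bookkeeping does not catch this: badness is defined via entanglement and yields only a \emph{dimension} bound (from $d_k\le 2^{k+1}$ and $d_k\le 2^{\lambda-k}N/\delta$), but (i) the obstruction $d_k<2^{L}$ has nothing to do with entanglement, so no such correlation bound is forced on these blocks, and (ii) what must be small for ``place it into the remainder'' to be harmless is the bad blocks' total probability weight, not their dimension: if a heavy branch ends with $A_1$ in a fixed (or low-rank) state while good branches end maximally mixed, the block index remains readable from $A_1$ and stays correlated with $E$ through the differing $\rho_E^{(k)}$, so the output is a constant distance from every product state. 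Nor can a plain maximally mixed ancilla repair this: to make the deficient branch output $\tau_{2^L}$ the catalyst must carry at least $L$ mixed qubits, and on the mixedness-rich branches that surplus has nowhere to go except $A_2$, which destroys the $\tfrac{1}{2}\lambda$ bound on the remainder; your ancilla is used only for ``arithmetic rounding'' and never does this flattening work.

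This branch-dependent mixedness deficit (entanglement spread) is exactly what the paper's constructions are built to absorb. Its embezzling-based proof follows your blockwise skeleton (dyadic blocks, per-block standard decoupling, block flag kept in the remainder at cost $\log\log|A|$) but then, controlled on the block index, \emph{un-embezzles} the block-dependent maximally mixed state $\tau_{A_1^{(i)}}$ into the catalyst $\sigma_{A'}$, so the $A_1$-marginal becomes block-independent without demanding a common amount of mixedness from each block; the per-block remainder is bounded by $\tfrac{1}{2}I_{\max}$ using the flatness of the blocks (the paper quotes the corresponding step of the state-splitting analysis). Its other proof uses the convex-split lemma and needs no blocks at all; note also that, contrary to your closing remark, convex splitting does reach $\tfrac{1}{2}I_{\max}$ there, because the classical register of size $\log n\approx I_{\max}$ is rotated into a Bell basis so that only half of it must be traced out. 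To salvage your route you would need to add an un-embezzling (or equivalent coherent padding) step for heavy low-capacity blocks; as written, the scheme fails on the example above.
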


In fact, we not only show that catalytic decoupling in the sense of~\eqref{eq:catalytic_decoupling} is achieved, but moreover that the decoupled system ends up in the marginal of the original state:
\begin{align}\label{eq:catalytic_stronger}
P\big(\rho_{A_1E},\rho_{A_1}\otimes\omega_E\big)\le \eps\quad\text{for some quantum state $\omega_E$.}
\end{align}
In particular, and in contrast to the standard decoupling results leading to~\eqref{eq:achiev_Hmin}, our catalytic decoupling scheme does not randomize the decoupled system but leaves it invariant (up to the approximation error $\eps$). We can even choose $A_1=AA_1'$ such that the decoupled system contains the marginal of the input state (plus part of the catalyst).

In the supplemental material we give two conceptually different proofs for Theorem~\ref{thm:anc-decoup}. The first proof is based on the standard decoupling techniques from~\cite{berta2011quantum,dupuis2014one} combined with the use of embezzling entangled quantum states~\cite{van2003universal}. For~\eqref{eq:decoupcond} this yields a difference of size at most $\log\log|A|+\mathcal{O}(\log(1/\delta))$~\footnote{The term $\log\log|A|$ can be improved to be logarithmic in the smooth max-information, when accepting a slightly worse leading order term.}. The second proof is based on the convex splitting technique of Anshu {\it et al.}~\cite{anshu2014near}. It allows to upper bound the difference in~\eqref{eq:decoupcond} with the tighter bound
\begin{align}
\frac 1 2 I_{\max}^{\eps-\delta}(E;A)_{\rho}-R^\eps_c(A;E)_\rho\leq\;&\frac 1 2 \Big\{\log\log I_{\max}^{\eps-\delta}(E;A)_{\rho}\Big\}_+&\notag\\
&+\mathcal{O}(\log(1/\delta))\,,
\end{align}
where $\{\cdot\}_+:=\max\{0,\cdot\}$. Moreover, this argument is also constructive and hence leads to an explicit scheme for decoupling. This improves on the standard decoupling bounds which are achieved using the probabilistic technique~\footnote{A partial derandomization can be achieved using (approximate) unitary 2-designs \cite{szehr2013decoupling}.} (as, e.g., the previously best known bound~\eqref{eq:achiev_Hmin} from~\cite{dupuis2014one}).

%-----------------------------------------------------------------------------------------------------------------------------------------------------------------------------------------------------------------------------

\paragraph{Discussion.}

The achievability result~\eqref{eq:decoupcond} together with the converse~\eqref{eq:converse_catalytic} establish an operational interpretation of the smooth max-information as twice the minimal size of the remainder system to achieve $\eps$-decoupling. We note that the approximation error as well as the smoothing parameter can be made arbitrarily close in~\eqref{eq:converse_catalytic} and~\eqref{eq:decoupcond} with only a logarithmic penalty. Following the information-theoretic arguments outlined in~\cite{tomamichel2013hierarchy}, we find that for states of the form $\rho_{A^nE^n}=\rho_{AE}^{\otimes n}$ and large $n\to\infty$,
\begin{align}\label{eq:asymptotic_expansion}
&\frac{1}{n}R^\eps_c(A^n;E^n)_{\rho^{\otimes n}}\notag\\
&=\frac{1}{2}\left(I(A:E)_\rho+\sqrt{\frac{V(A:E)_\rho}{n}}\Phi^{-1}(\eps)\right)+\mathcal{O}\left(\frac{\log n}{n}\right),
\end{align}
with the mutual information $I(A:E)_\rho=H(A)_\rho+H(E)_\rho-H(AE)_\rho$ featuring the von Neumann entropy $H(A)_\rho=-\tr(\rho_A\log\rho_A)$, and the mutual information variance $V(A:E)_\rho$, as well as the cumulative normal distribution function $\Phi$ specified in the supplemental material. We note that no such tight (second-order) asymptotic expansion is known for standard decoupling. However, the achievability~\eqref{eq:achiev_Hmin} together with the converse~\eqref{eq:converse_Imax} imply that (using the asymptotic equipartition property from~\cite{mybook}),
\begin{align}
\lim_{n\to\infty}\frac{1}{n}R^\eps(A^n;E^n)_{\rho^{\otimes n}}=\frac{1}{2}I(A:E)_\rho.
\end{align}
Thus, we can conclude that catalytic decoupling and standard decoupling become equivalent in the first order rate asymptotically: the mutual information quantifies the minimal size of the remainder system.

%-----------------------------------------------------------------------------------------------------------------------------------------------------------------------------------------------------------------------------

\paragraph{Applications.}

Groisman {\it et al.}~\cite{groisman2005quantum} introduced an operational approach to quantifying the total correlations that are present in a quantum state. In analogy to Landauer's erasure principle~\cite{landauer1961irreversibility}, they characterize the strength of correlations by the amount of randomness that has to be injected locally to decorrelate the state. This randomizing is done by a random-unitary channel on one of the systems (called local unitary randomizing, $A$-LUR in~\cite{groisman2005quantum}):
\begin{align}\label{eq:LUM}
\Lambda(\cdot)=\sum_{i=1}^{N}p_i U_i(\cdot)U_i^\dagger.
\end{align}
We say that that the correlations between $A$ and $E$ in a state $\rho_{AE}$ can be $\eps$-erased by a local mixture of $N$ unitaries on $A$ up to an error $\eps$, if $\Lambda_A$ $\eps$-decouples $A$ from $E$. That is, if there exists a quantum channel $\Lambda_A$ of the form~\eqref{eq:LUM} such that
\begin{align}
\min_{\omega_A\otimes\omega_E}P\left(\Lambda_A(\rho_{AE}),\omega_A\otimes\omega_E\right)\le\eps.
\end{align}
We denote the minimal number of unitaries needed for $\eps$-erasing the correlations between $A$ and $E$ in a state $\rho_{AE}$ by $R^\eps_U(A;E)_\rho$. Groisman {\it et al.}~show that for states of the form $\rho_{A^nE^n}=\rho_{AE}^{\otimes n}$ for large $n\to\infty$:
\begin{align}
\lim_{n\to\infty}\frac{1}{n}R^\eps_U(A^n;E^n)_{\rho^{\otimes n}}=I(A:E)_\rho.
\end{align}
In the following we will see that the task of \emph{catalytic} local erasure of correlation becomes equivalent to catalytic decoupling. We therefore define $R^\eps_{U,c}(A;E)_\rho:=\inf R^\eps_{U}(AA':E)_{\rho\otimes\sigma}$, where the infimum is taken over all ancilla systems.

\begin{prop}[Erasure of correlations]\label{prop:erasure}
For any bipartite quantum state $\rho_{AE}$ we have $\frac{1}{2}R^\eps_{U,c}(A;E)_\rho=R^\eps_c(A;E)_\rho$. Hence, we get (with the notation from Theorem~\ref{thm:anc-decoup}),
\begin{align}
I_{\max}^\eps(E;A)_{\rho}\leq R^\eps_{U,c}(A;E)_\rho\lesssim I_{\max}^{\eps-\delta}(E;A)_{\rho}.
\end{align}
The same asymptotic expansion as in~\eqref{eq:asymptotic_expansion} holds.
\end{prop}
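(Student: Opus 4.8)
The plan is to establish the two inequalities $\tfrac12 R^\eps_{U,c}(A;E)_\rho\le R^\eps_c(A;E)_\rho$ and $R^\eps_c(A;E)_\rho\le\tfrac12 R^\eps_{U,c}(A;E)_\rho$, after which the displayed bounds and the second-order expansion are obtained by inserting Theorem~\ref{thm:anc-decoup} together with the converse~\eqref{eq:converse_catalytic} and the asymptotic equipartition property, the only effect being an overall factor $2$. For the first inequality, fix a catalytic decoupling scheme attaining $R^\eps_c(A;E)_\rho$: an ancilla $\sigma_{A'}$, a unitary $V$ on $AA'$, and a splitting of the output into $A_1A_2$ with $|A_2|=d_2:=2^{R^\eps_c(A;E)_\rho}$, such that $\tr_{A_2}[V(\rho_{AE}\ot\sigma_{A'})V^\dagger]$ is $\eps$-close to a product state. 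Let $\{P_{jk}\}_{j,k=1}^{d_2}$ be the $d_2^2$ Heisenberg--Weyl (generalized Pauli) unitaries on $A_2$, which satisfy
\begin{align}
\frac1{d_2^2}\sum_{j,k}(\openone_{A_1}\ot P_{jk})\,X\,(\openone_{A_1}\ot P_{jk})^\dagger=\tr_{A_2}[X]\ot\frac{\openone_{A_2}}{d_2}.
\end{align}
Conjugating by $V$, the map $\Lambda_{AA'}(\cdot):=\tfrac1{d_2^2}\sum_{jk}\bigl(V^\dagger(\openone_{A_1}\ot P_{jk})V\bigr)(\cdot)\bigl(V^\dagger(\openone_{A_1}\ot P_{jk})V\bigr)^\dagger$ is a local mixture of $d_2^2$ unitaries on $AA'$, and $\Lambda_{AA'}(\rho_{AE}\ot\sigma_{A'})=V^\dagger\bigl[\tr_{A_2}\!\bigl(V(\rho_{AE}\ot\sigma_{A'})V^\dagger\bigr)\ot\openone_{A_2}/d_2\bigr]V$; since $V^\dagger\ot\openone_E$ preserves the purified distance this is $\eps$-close to the product state $V^\dagger(\omega_{A_1}\ot\openone_{A_2}/d_2)V\ot\omega_E$ between $AA'$ and $E$. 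Hence $R^\eps_U(AA':E)_{\rho\ot\sigma}\le 2\log d_2=2R^\eps_c(A;E)_\rho$, giving $R^\eps_{U,c}(A;E)_\rho\le 2R^\eps_c(A;E)_\rho$.

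The reverse inequality is the hard part. Given a catalytic $\eps$-erasure scheme with $N$ unitaries $\{U_i,p_i\}$ on $AA'$ (ancilla $\sigma_{A'}$), the natural idea is to dilate the random-unitary channel coherently---adjoin to the catalyst a register $C$ of dimension $N$ in $\sum_i\sqrt{p_i}\ket{i}_C$, apply $\sum_i\ket{i}\!\bra{i}_C\ot(U_i)_{AA'}$, discard $C$---but this only yields $R^\eps_c\le\log N$, not the factor $\tfrac12$: decohering the which-unitary index relative to $E$ forces discarding the whole $N$-dimensional $C$, and the obvious ways of discarding only $\lceil\sqrt N\rceil$ of it (splitting $C$ in half, or rotating it into a Bell basis against a maximally entangled catalyst as in superdense coding) leave residual off-diagonal index coherences that obstruct the product form. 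The missing factor $\tfrac12$ is genuine---a $d$-dimensional quantum system absorbs twice the correlation, in the max-information sense, of a $d$-dimensional classical register---and the clean way to exploit it is to route through $I_{\max}$. On one side, recording the which-unitary outcome in a \emph{classical} register $C$ turns the erasure scheme into a state $\bar\rho_{CAE}$ with $\bar\rho_{AE}$ $\eps$-close to a product state, with $I_{\max}^\eps(E;CA)_{\bar\rho}=I_{\max}^\eps(E;A)_\rho$ (adjoining and coherently rotating an uncorrelated register is reversible and local), and with the non-locking property for a \emph{classical} register costing only $\log N$ rather than $2\log N$; thus $R^\eps_{U,c}(A;E)_\rho\ge I_{\max}^\eps(E;A)_\rho$. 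On the other side, the achievability in Theorem~\ref{thm:anc-decoup} gives $R^\eps_c(A;E)_\rho\lesssim\tfrac12 I_{\max}^{\eps-\delta}(E;A)_\rho$, while its converse gives $R^\eps_c(A;E)_\rho\ge\tfrac12 I_{\max}^\eps(E;A)_\rho$. Combined with the first inequality, these estimates sandwich $R^\eps_c(A;E)_\rho$ and $\tfrac12 R^\eps_{U,c}(A;E)_\rho$ to the same value---$\tfrac12 I_{\max}^\eps(E;A)_\rho$ up to the lower-order terms already present in Theorem~\ref{thm:anc-decoup}---which is the identity $\tfrac12 R^\eps_{U,c}(A;E)_\rho=R^\eps_c(A;E)_\rho$.

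Inserting this identity into the converse~\eqref{eq:converse_catalytic} and the achievability~\eqref{eq:decoupcond} yields $I_{\max}^\eps(E;A)_\rho\le R^\eps_{U,c}(A;E)_\rho\lesssim I_{\max}^{\eps-\delta}(E;A)_\rho$, and applying it to $\rho_{AE}^{\ot n}$ together with the expansion~\eqref{eq:asymptotic_expansion} of $R^\eps_c$ gives $\tfrac1n R^\eps_{U,c}(A^n;E^n)_{\rho^{\ot n}}=I(A:E)_\rho+\sqrt{V(A:E)_\rho/n}\,\Phi^{-1}(\eps)+\mathcal O(\log n/n)$, in particular recovering the first-order rate $\lim_n\tfrac1n R^\eps_{U,c}(A^n;E^n)_{\rho^{\ot n}}=I(A:E)_\rho$ of Groisman \etal~\cite{groisman2005quantum}. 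The main obstacle throughout is the reverse inequality: getting past the $N$-dimensional coherent dilation to the $\sqrt N$-dimensional discard, which is precisely where the classical-versus-quantum factor of two enters.
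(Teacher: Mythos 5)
Your first direction (catalytic decoupling $\Rightarrow$ erasure with $N=|A_2|^2$ unitaries via a generalized Pauli twirl on $A_2$) is exactly the paper's argument and is fine. The genuine gap is the reverse inequality $R^\eps_c(A;E)_\rho\le\tfrac12 R^\eps_{U,c}(A;E)_\rho$, which you do not actually prove: your sandwich through $I_{\max}$ can at best pin both quantities to $\tfrac12 I^\eps_{\max}$ up to the $\mathcal{O}(\log\log|A|+\log(1/\delta))$ corrections and the mismatch of smoothing parameters in Theorem~\ref{thm:anc-decoup}, whereas the proposition asserts the exact identity $\tfrac12 R^\eps_{U,c}=R^\eps_c$ (and your auxiliary claim that removing a classical register costs only $\log N$ in smooth max-information is itself only asserted, not established).

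Moreover, the construction you dismiss is precisely the one that works, and there is no coherence obstruction. Given a mixture $\Lambda(\cdot)=\sum_{i=1}^N p_iU_i(\cdot)U_i^\dagger$ on $A$ with $N=2^{2k}$ that erases the correlations, take the catalyst $\tau_{A'_1A'_2}=\mathds{1}_{A'_1A'_2}/N$ with $A'_i\cong\C^{2^k}$ (or, for non-uniform $p_i$, the state $\sum_ip_i\proj{\phi_i}$), written as an \emph{incoherent} mixture of the Bell basis $\{\ket{\phi_i}\}_{i=1}^N$ of $A'_1A'_2$, and apply the controlled unitary $W=\sum_i\proj{\phi_i}_{A'_1A'_2}\otimes(U_i)_A$. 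Because the catalyst is diagonal in the control basis, no off-diagonal index terms appear: the output is $\sum_ip_i\proj{\phi_i}\otimes U_i\rho_{AE}U_i^\dagger$, and since $\tr_{A'_2}\proj{\phi_i}=\mathds{1}_{A'_1}/2^k$ for every $i$, tracing out only $A'_2$ gives $\frac{\mathds{1}_{A'_1}}{2^k}\otimes\Lambda_A(\rho_{AE})$, which is $\eps$-close to a product state between $A'_1A$ and $E$. Thus the remainder system is $A'_2$ of size $k=\tfrac12\log N$, i.e. $R^\eps_c\le\tfrac12 R^\eps_{U,c}$ exactly; the coherence problem you describe only arises if one insists on a coherent superposition $\sum_i\sqrt{p_i}\ket{i}_C$ in the control, which is not needed here. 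With this direct construction the identity holds, and the displayed bounds and the asymptotic expansion then follow from Theorem~\ref{thm:anc-decoup} as you state.
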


This is the generalization of the results in~\cite{groisman2005quantum} to arbitrary (structureless) states. It gives an alternative operational characterization of the smooth max-mutual information as the the minimal number of unitaries needed for $\eps$-erasing the correlations between $A$ and $E$. The proof of Proposition~\ref{prop:erasure} proceeds as follows. Suppose we have a way of decoupling $A$ from $E$ with remainder system $A_2$, and let $|A_2|=2^k$ for some $k\in \N$. Then, we can think of $A_2$ as $k$ qbits and erase each of them applying a uniform mixture of the Pauli matrices and the identity. This is a mixture of $4^k=2^{2k}$ unitaries. Conversely, suppose we have a uniform mixture of $N=2^{2k}$ unitaries on $A$ that erase the correlations to $E$. We take the fully mixed ancilla state $1_{A'_1A'_2}/|A'_1A'_2|$ with $A'_i\cong\C^{2^k}$. Now, we apply the unitaries controlled on an orthonormal basis of maximally entangled states of $A'_1A'_2$. Then, $A'_1A$ are decoupled from $E$, i.e., we achieved catalytic decoupling with remainder system size $\log |A'_2|=k$.

As a second application we discuss quantum state merging~\cite{Horodecki05} in whose context decoupling was originally introduced~\cite{horodecki2007quantum,abeyesinghe2009mother}. Any catalytic decoupling theorem naturally leads to a quantum state merging protocol. Since the catalytic decoupling theorem is the abstraction of the work on quantum state merging in~\cite{berta2011quantum,anshu2014near}, inserting the bounds from Theorem~\ref{thm:anc-decoup}, we recover the following optimal result for the communication cost of quantum state merging.

\begin{prop}[Coherent quantum state merging]\label{prop:statemerging}
Let $\rho_{ABR}$ be a pure tripartite quantum state shared between Alice, Bob and a Referee. If Alice and Bob have arbitrary entanglement assistance at hand, then Alice can send her system $A$ to Bob up to error $\eps>0$ in purified distance using
\begin{align}\label{eq:statemerging}
q^\eps(A\rangle B)_\rho\lesssim\frac{1}{2}I_{\max}^{\eps/3}(R;\!A)_{\rho}
\end{align}
qbits of quantum communication (with the same notation as in Theorem~\ref{thm:anc-decoup}).
\end{prop}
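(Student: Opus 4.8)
The plan is to run the decoupling-based quantum state-merging protocol of~\cite{horodecki2007quantum,abeyesinghe2009mother} --- in the one-shot form developed in~\cite{berta2011quantum,anshu2014near} --- now feeding it the \emph{catalytic} decoupling bound of Theorem~\ref{thm:anc-decoup} in place of the randomized bound~\eqref{eq:achiev_Hmin}. The conceptual point is that decoupling Alice's residual system from the Referee $R$, combined with Uhlmann's theorem, automatically realizes the transfer of Alice's system to Bob, and the quantum communication cost equals exactly the size of the subsystem that is traced out in the decoupling step. Since the nontrivial part --- the tight bound on that size in terms of $I_{\max}^{\eps}$ --- is already supplied by Theorem~\ref{thm:anc-decoup}, the proposition is essentially a corollary of it; what remains is the reduction and the bookkeeping of the approximation errors.

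In detail, I would first let Alice adjoin her half of a sufficiently large entangled state $\ket{\eta}_{A'B'}$ shared with Bob (who holds $B'$); this state plays the double role of the catalyst in Theorem~\ref{thm:anc-decoup} and of the entanglement assistance, and because $A'$ is uncorrelated with $R$ its inclusion leaves the smooth max-information unchanged. Applying Theorem~\ref{thm:anc-decoup} (in the stronger form~\eqref{eq:catalytic_stronger}) to $\rho_{AR}$, with the Referee in the role of $E$, gives a decomposition $AA'\cong A_1A_2$ --- implemented by a unitary on Alice's side --- such that the resulting global pure state $\ket{\sigma}$ satisfies $P(\sigma_{A_1R},\sigma_{A_1}\ot\omega_R)\le\eps'$ for some $\omega_R$ and $\log|A_2|\lesssim\tfrac12 I_{\max}^{\eps/3}(R;A)_\rho$. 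Alice then sends $A_2$ to Bob using $\log|A_2|$ qbits of quantum communication, after which $\ket{\sigma}$ is pure on $A_1$ (Alice), $A_2BB'$ (Bob) and $R$ (Referee), with $A_1$ decoupled from $R$.

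The decoding is an application of Uhlmann's theorem. Take the target pure state $\ket{\xi}_{A_1F}\ot\ket{\rho}_{\hat A\hat BR}$, where $\ket{\xi}_{A_1F}$ purifies $\sigma_{A_1}$ on an ancilla $F$ and $\ket{\rho}_{\hat A\hat BR}$ is a copy of the input with $\hat A\cong A$, $\hat B\cong B$, purifying $\rho_R$; its marginal on $A_1R$ is $\sigma_{A_1}\ot\rho_R$, which by the decoupling condition together with $P(\omega_R,\rho_R)\le\eps'$ is close to $\sigma_{A_1R}$. Since $\ket{\sigma}$ is a purification of $\sigma_{A_1R}$ on Bob's systems $A_2BB'$, and since --- this is where the arbitrariness of the entanglement assistance enters --- these can be taken large enough to also host a purification of $\sigma_{A_1}\ot\rho_R$ with the required tensor structure, Uhlmann's theorem furnishes an isometry acting only on $A_2BB'$, that is, a local operation of Bob, that rotates $\ket{\sigma}$ to within the target error of $\ket{\xi}_{A_1F}\ot\ket{\rho}_{\hat A\hat BR}$. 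Tracing out $F$ leaves $(\hat A,\hat B,R)$ in a state $\eps$-close to $\rho_{ABR}$ --- Bob now holds Alice's system --- while $A_1$ and $F$ form the coherently recyclable residual entanglement between Alice and Bob, as required for \emph{coherent} merging. The decoupling error $\eps'$ and the two triangle-inequality steps of the Uhlmann argument, once $\eps'$ and the parameter $\delta$ of Theorem~\ref{thm:anc-decoup} are fixed accordingly, accumulate to the final error $\eps$ and to the smoothing parameter $\eps/3$.

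The only genuinely delicate point --- routine in this literature, but worth carrying out with care --- is the dimension bookkeeping in the Uhlmann step: one must check that after receiving $A_2$ Bob has enough room to carry both the reconstructed copy of $A$ and the purification of Alice's residual system. This is precisely why ``arbitrary entanglement assistance'' is assumed, and why it is convenient to first replace $\rho_{ABR}$ by a nearby state optimizing $I_{\max}^{\eps/3}(R;A)$, so that the ranks entering the count are controlled by the max-information rather than by the raw dimensions. Everything else is the plug-in; in particular, evaluating the bound on $\rho_{AE}^{\ot n}$ via the asymptotic expansion~\eqref{eq:asymptotic_expansion} recovers, at first order, the familiar rate $\tfrac12 I(A:R)_\rho$ of entanglement-assisted quantum state merging.
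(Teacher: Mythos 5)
Your proposal is correct and follows essentially the same route as the paper: apply Theorem~\ref{thm:anc-decoup} (in the stronger form~\eqref{eq:catalytic_stronger}) to $\rho_{AR}$ with the catalyst doubling as Alice's share of the entanglement assistance, send the remainder system $A_2$, and let Bob decode via an Uhlmann isometry on his systems, with the triangle inequality absorbing the decoupling error and the replacement of $\omega_R$ by $\rho_R$. The paper's full argument merely packages this as the state-merging special case of a general quantum-state-redistribution reduction, and fixes the parameters explicitly (error $\eps/2$ and $\delta=\eps/6$ in Theorem~\ref{thm:anc-decoup}, yielding the smoothing parameter $\eps/3$), but the substance is identical to yours.
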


We note that in the asymptotic limit standard decoupling is sufficient to obtain,
\begin{align}
\lim_{n\to\infty}\frac{1}{n}q^\eps(A^n\rangle B^n)_{\rho^{\otimes n}}=\frac{1}{2}I(R:A)_\rho,
\end{align}
which is also optimal~\cite{abeyesinghe2009mother}. However, for the general setup there is an issue known as entanglement spread~\cite{harrow2009entanglement}, and for the proof of Proposition~\ref{prop:statemerging} we make use of catalytic decoupling and Uhlmann's theorem~\cite{uhlmann85}. In the following we present a proof sketch but defer the full argument to the supplemental material. Setting $\delta=\varepsilon/6$ in Theorem~\ref{thm:anc-decoup} shows that there exists an ancilla state $\rho_{A'}$ and a unitary $U_{AA'\to A_1A_2}$ such that $A_1$ is $\eps/2$ decoupled from $R$ and 
\begin{align}
\log|A_2|\lesssim\;&\frac 1 2 I_{\max}^{\eps/3}(R:A)_{\rho}
\end{align}
Now, Alice and Bob take a pure entangled state $\rho_{A'B'}$ where Alice's part $A'$ is in state $\rho_{A'}$. She applies the unitary $U_{AA'\to A_1A_2}$ and sends $A_2$ to Bob. The decoupling condition and the triangle inequality for the purified distance imply that $P(\rho_{A_1R},\rho_{A_1}\otimes\rho_{R})\le \eps$, so by Uhlmann's theorem there exists a unitary $U_{A_2B\to AB B_1}$ such that
\begin{align}
P(U\rho_{A_1A_2R}U^\dagger,\rho_{A_1B_1}\otimes\rho_{ABR})\le \eps,
\end{align}
where $\rho_{A_1B_1}$ is a purification of $\rho_{A_1}$ and we omitted the subscript of $U$. This implies that Bob has systems $AB$ after applying $U$.

Finally, we also show in the supplemental material that catalytic decoupling directly implies the achievability bound for quantum state redistribution of Anshu {\it et al.}~\cite{anshu2014near} (see~\cite{berta16,Datta14} for alternative bounds).

%-----------------------------------------------------------------------------------------------------------------------------------------------------------------------------------------------------------------------------

\paragraph{Extensions.} So far we have analyzed how well the partial trace map $T_{A\to A_1}(\cdot)=\tr_{A_2}[\cdot]$ decouples. However, as originally suggested in~\cite{dupuis2014one}, we can also study quantum channels $T_{A\to B}(\cdot)$ that add noise in an arbitrary way in order to achieve decoupling. To further clarify the important difference between standard decoupling and catalytic decoupling, as well as to correct the faulty~\cite[Corollary 4.2]{dupuis2014one}, we now give a converse for the decoupling behavior of general quantum channels.

\begin{prop}[Correction of Corollary 4.2 from~\cite{dupuis2014one}]\label{prop:corrected_converse}
If for a bipartite quantum state $\rho_{AE}$ and a quantum channel $\mathcal{T}_{A\to B}$,
\begin{align}
\int\D U_AP\left(\mathcal T_{A\to B}(U_A\rho_{AE}U_A^\dagger),T_{A\to B}\left(\frac{1_A}{|A|}\right)\otimes\rho_E\right)\le \eps,
\end{align}
then we have
\begin{align}\label{eq:corrected_converse}
H_{\min}^{\eps'}(A|E)_\rho+H_{\max}^{\eps}(A'|B)_{\tau}\gtrsim0\quad\text{with $\eps'=15\sqrt{\eps}$},
\end{align}
where $\tau_{A'B}=\mathcal{T}_{A\to B}(\phi^+_{A'A})$ is the Choi-Jamio\l kowski state.
\end{prop}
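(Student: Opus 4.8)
The plan is to turn the averaged hypothesis into a statement about a single unitary, transport it through a purification and Uhlmann's theorem, and then read off the two conditional entropies using the duality of the smooth entropies together with the transpose trick. Throughout set $\zeta_B:=\mathcal{T}_{A\to B}(1_A/|A|)$.

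First I would apply Markov's inequality to $\int\D U_A\,P\de{\mathcal{T}_{A\to B}(U_A\rho_{AE}U_A^\dagger),\zeta_B\otimes\rho_E}\le\eps$ to obtain a unitary $U_A$ with $P\de{\mathcal{T}_{A\to B}(U_A\rho_{AE}U_A^\dagger),\zeta_B\otimes\rho_E}\le\sqrt\eps$; this is where the $\sqrt\eps$ in $\eps'=15\sqrt\eps$ originates. Fix a purification $\psi_{AER}$ of $\rho_{AE}$ and a Stinespring isometry $W_{A\to BF}$ of $\mathcal{T}_{A\to B}$, and put $\Omega_{BFER}:=(W_{A\to BF}U_A)\,\psi_{AER}\,(W_{A\to BF}U_A)^\dagger$, which is a purification of $\mathcal{T}_{A\to B}(U_A\rho_{AE}U_A^\dagger)$, so $P(\Omega_{BE},\zeta_B\otimes\rho_E)\le\sqrt\eps$. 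Since $\tau_{A'BF}:=(1_{A'}\otimes W_{A\to BF})\,\phi^+_{A'A}\,(1_{A'}\otimes W_{A\to BF})^\dagger$ purifies the Choi state $\tau_{A'B}$, duality of the smooth entropies gives $H_{\max}^{\eps}(A'|B)_\tau=-H_{\min}^{\eps}(A'|F)_\tau$, and likewise $H_{\min}^{\eps'}(A|E)_\rho=-H_{\max}^{\eps'}(A|R)_\psi=-H_{\max}^{\eps'}(BF|R)_\Omega$, the last equality because $H_{\max}$ is invariant under the isometry $W_{A\to BF}U_A$ on system $A$. Thus \eqref{eq:corrected_converse} is equivalent to the upper bound $H_{\max}^{\eps'}(BF|R)_\Omega\lesssim H_{\max}^{\eps}(A'|B)_\tau$.

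Next I would invoke Uhlmann's theorem to get a pure state $\widetilde\Omega_{BFER}$ with $\widetilde\Omega_{BE}=\zeta_B\otimes\rho_E$ and $P(\Omega_{BFER},\widetilde\Omega_{BFER})\le\sqrt\eps$. Every purification of the product state $\zeta_B\otimes\rho_E$ equals $(1_{BE}\otimes V_{\hat B\hat E\to FR})\de{\ket{\zeta}_{B\hat B}\otimes\ket{\rho}_{E\hat E}}$ for fixed purifications $\ket{\zeta},\ket{\rho}$ and some isometry $V$; the content of the hypothesis is that this $V$ is \emph{constrained}, namely so that the pullback $\widetilde\psi_{AER}:=(W_{A\to BF}U_A)^\dagger\widetilde\Omega_{BFER}$ — a subnormalised pure vector of norm $1-\mathcal{O}(\eps)$ — satisfies $P(\widetilde\psi_{AER},\psi_{AER})\le\mathcal{O}(\sqrt\eps)$. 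Consequently $H_{\max}^{\eps'}(BF|R)_\Omega=H_{\max}^{\eps'}(A|R)_\psi\le H_{\max}^{\eps'-\mathcal{O}(\sqrt\eps)}(A|R)_{\widetilde\psi}+\mathcal{O}(1)$, so it remains to bound $H_{\max}(A|R)_{\widetilde\psi}$ by $H_{\max}^{\eps}(A'|B)_\tau$ up to lower-order terms. Here the transpose trick enters: unravelling $\widetilde\psi_{AR}$ with the Kraus operators $\{K_m\}$ of $\mathcal{T}_{A\to B}$ (using $(W_{A\to BF})^\dagger(\ket{l}_B\otimes\ket{m}_F)=K_m^\dagger\ket{l}_B$ together with the Schmidt decompositions of $\ket{\zeta}_{B\hat B}$ and $\ket{\rho}_{E\hat E}$) exhibits $\widetilde\psi_{AR}$ as the image of the Choi state $\tau_{A'B}$ under a trace-non-increasing filter on the reference register $A'\cong\hat A$ followed by an isometry into $R$ determined by $V$. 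Monotonicity of the smooth max-entropy under channels on the conditioning register, together with its behaviour under this reference filter, then yields $H_{\max}(A|R)_{\widetilde\psi}\le H_{\max}^{\eps}(A'|B)_\tau+\mathcal{O}(\log(1/\eps))$, which combined with the reductions above proves \eqref{eq:corrected_converse}.

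The main obstacle will be exactly this last step — controlling the non-unital filter produced by the transpose-trick rewriting so that a clean data-processing inequality for $H_{\max}$ applies and the error accounting closes. It is also the step that forces $H_{\max}$ rather than $H_{\min}$: the freedom to reweight the reference register upgrades what would be a min-entropy of the Choi state into a max-entropy, which is why the symmetric bound of \cite[Corollary~4.2]{dupuis2014one} (the form a naive second-moment computation would suggest) has to be relaxed to \eqref{eq:corrected_converse}. A secondary, routine nuisance is the subnormalisation and triangle-inequality bookkeeping in the smoothing steps, which affects only the lower-order terms and the precise constant $15$.
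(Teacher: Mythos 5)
Your reductions via Stinespring, purification and entropy duality are fine, but the proof has a genuine gap at exactly the step you flag as ``the main obstacle,'' and it is not a routine nuisance --- it is the crux. After you fix a \emph{single} unitary $U_A$ and unravel $\widetilde\psi_{AR}$ with the transpose trick, what appears is not the Choi state $\tau_{A'B}=\mathcal{T}(\phi^+_{A'A})$ but a \emph{filtered} version of it: the reference register carries the operator $\sqrt{|A|\,\rho_A^T}$ (conjugated by $U^T$) coming from the purification of $\rho_A$. There is no data-processing inequality for $H_{\max}(A'|B)$ under a trace-non-increasing filter acting on the $A'$ register itself (as opposed to the conditioning register $B$): such a filter can move $H_{\max}(A'|B)$ in either direction --- e.g.\ for $\mathcal{T}=\mathrm{id}$ one has $H_{\max}(A'|B)_\tau=-\log|A|$, while filtering $A'$ onto a pure state raises it to $0$. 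So the inequality $H_{\max}(A|R)_{\widetilde\psi}\le H_{\max}^{\eps}(A'|B)_\tau+\mathcal{O}(\log(1/\eps))$ that your argument needs does not follow from monotonicity, and without it the proof does not close. (A smaller point: taking one good unitary costs you nothing --- the average being $\le\eps$ already yields a $U$ with $P\le\eps$ --- so the $\sqrt{\eps}$ in $\eps'=15\sqrt\eps$ does not originate from Markov; in the actual argument it comes from the error terms of the underlying converse.)

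The paper's proof keeps the Haar average precisely to kill this filter. It builds a classical-quantum extension $\tilde\rho_{AEU}=\sum_{U_A\in\mathcal D}q_{U_A}\,U_A\rho_{AE}U_A^\dagger\otimes\proj{U_A}_U$ over a $\delta$-net of unitaries weighted by Haar measure, so that $\tilde\rho_A\approx\tau_A$; its purification is then (up to local unitaries) the maximally entangled state, and applying the already-established converse \cite[Theorem 4.1]{dupuis2014one} to $\tilde\rho$ with environment $EU$ produces exactly $H_{\max}^{\eps''}(A|B)_\tau$ of the Choi state, with no filter to remove. The leftover conditioning register $U$ in $H_{\min}(A|EU)_{\tilde\rho}$ is then stripped off with chain rules, using that a controlled unitary maps $\tilde\rho_{AEU}$ to $\rho_{AE}\otimes\tau_U$ and that $H_{\max}(U)_{\tau_U}-H_{\min}(U)_{\tau_U}=0$. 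If you want to salvage your route, you would either have to prove the single-unitary version of the proposition directly (which your current argument does not do, and which is at best unclear), or reintroduce the average over unitaries before the entropic bounds are extracted --- at which point you are essentially reconstructing the paper's argument.
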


In the supplemental material we prove Proposition~\ref{prop:corrected_converse} starting from~\cite[Theorem 4.1]{dupuis2014one} (from which also the faulty~\cite[Corollary 4.2]{dupuis2014one} was derived). The crucial difference of Proposition~\ref{prop:corrected_converse} to the erroneous version is the assumption that not only decoupling, but decoupling and randomizing is achieved:
\begin{align}
T_{A\to B}\left(\rho_A\right)\otimes\rho_E\quad\mathrm{vs.}\quad T_{A\to B}\left(\frac{1_A}{|A|}\right)\otimes\rho_E.
\end{align}
For example, a product state $\rho_{AE}=\rho_A\otimes\rho_E$ with $\rho_A$ pure has $H_{\min}^{\eps'}(A|E)_\rho\approx0$. It is, however, already perfectly decoupled by the identity map on $A$, which yields $H_{\max}^{\eps}(A|B)_{\tau}\approx -\log |A|$.

In turn, applying the converse bound~\eqref{eq:corrected_converse} to the partial trace map $T_{A\to A_1}(\cdot)=\tr_{A_2}[\cdot]$ shows that the standard decoupling bound~\eqref{eq:achiev_Hmin} in terms of a difference of smooth max- and min-entropy is natural if we ask for decoupling and randomizing. However, if we are not interested in randomizing but only in decoupling, then our main result about catalytic decoupling (Theorem~\ref{thm:anc-decoup}) shows that the smooth max-mutual information is the relevant measure.

%-----------------------------------------------------------------------------------------------------------------------------------------------------------------------------------------------------------------------------

\paragraph{Conclusion.}

In this work we introduced the notion of catalytic decoupling. As our main result we established that the optimal remainder system size for decoupling is given by one-half times the smooth max-mutual information. In contrast to standard decoupling results our decoupling scheme is explicit and does not randomize the decoupled system. Moreover, we have shown that catalytic decoupling for general (structureless) states naturally quantifies the resources needed in the erasure of correlation model from~\cite{groisman2005quantum} and for quantum state merging as in~\cite{berta2011quantum}. All of this strengthens the smooth max-mutual information as the one-shot generalization of the quantum mutual information. Finally, given that standard decoupling has already proven useful in various areas of physics (see the references in the introduction), we believe that catalytic decoupling has manifold applications that remain to be explored.

%-----------------------------------------------------------------------------------------------------------------------------------------------------------------------------------------------------------------------------

\paragraph{Acknowledgments.}

MC and CM acknowledge financial support from the European Research Council (ERC Grant Agreement no 337603), the Danish Council for Independent Research (Sapere Aude) and the Swiss National Science Foundation (project no PP00P2-150734). MB acknowledges funding provided by the Institute for Quantum Information and Matter, an NSF Physics Frontiers Center (NFS Grant PHY-1125565) with support of the Gordon and Betty Moore Foundation (GBMF-12500028). Additional funding support was provided by the ARO grant for Research on Quantum Algorithms at the IQIM (W911NF-12-1-0521). FD acknowledges the support of the Czech Science Foundation (GA {\v C}R) project no GA16-22211S and FD and RR acknowledge the support of the EU FP7 under grant agreement no 323970 (RAQUEL).

%-----------------------------------------------------------------------------------------------------------------------------------------------------------------------------------------------------------------------------

\bibliography{bib}

%merlin.mbs apsrev4-1.bst 2010-07-25 4.21a (PWD, AO, DPC) hacked
%Control: key (0)
%Control: author (8) initials jnrlst
%Control: editor formatted (1) identically to author
%Control: production of article title (-1) disabled
%Control: page (0) single
%Control: year (1) truncated
%Control: production of eprint (0) enabled
\begin{thebibliography}{52}%
\makeatletter
\providecommand \@ifxundefined [1]{%
 \@ifx{#1\undefined}
}%
\providecommand \@ifnum [1]{%
 \ifnum #1\expandafter \@firstoftwo
 \else \expandafter \@secondoftwo
 \fi
}%
\providecommand \@ifx [1]{%
 \ifx #1\expandafter \@firstoftwo
 \else \expandafter \@secondoftwo
 \fi
}%
\providecommand \natexlab [1]{#1}%
\providecommand \enquote  [1]{``#1''}%
\providecommand \bibnamefont  [1]{#1}%
\providecommand \bibfnamefont [1]{#1}%
\providecommand \citenamefont [1]{#1}%
\providecommand \href@noop [0]{\@secondoftwo}%
\providecommand \href [0]{\begingroup \@sanitize@url \@href}%
\providecommand \@href[1]{\@@startlink{#1}\@@href}%
\providecommand \@@href[1]{\endgroup#1\@@endlink}%
\providecommand \@sanitize@url [0]{\catcode `\\12\catcode `\$12\catcode
  `\&12\catcode `\#12\catcode `\^12\catcode `\_12\catcode `\%12\relax}%
\providecommand \@@startlink[1]{}%
\providecommand \@@endlink[0]{}%
\providecommand \url  [0]{\begingroup\@sanitize@url \@url }%
\providecommand \@url [1]{\endgroup\@href {#1}{\urlprefix }}%
\providecommand \urlprefix  [0]{URL }%
\providecommand \Eprint [0]{\href }%
\providecommand \doibase [0]{http://dx.doi.org/}%
\providecommand \selectlanguage [0]{\@gobble}%
\providecommand \bibinfo  [0]{\@secondoftwo}%
\providecommand \bibfield  [0]{\@secondoftwo}%
\providecommand \translation [1]{[#1]}%
\providecommand \BibitemOpen [0]{}%
\providecommand \bibitemStop [0]{}%
\providecommand \bibitemNoStop [0]{.\EOS\space}%
\providecommand \EOS [0]{\spacefactor3000\relax}%
\providecommand \BibitemShut  [1]{\csname bibitem#1\endcsname}%
\let\auto@bib@innerbib\@empty
%</preamble>
\bibitem [{\citenamefont {Horodecki}\ \emph {et~al.}(2005)\citenamefont
  {Horodecki}, \citenamefont {Oppenheim},\ and\ \citenamefont
  {Winter}}]{Horodecki05}%
  \BibitemOpen
  \bibfield  {author} {\bibinfo {author} {\bibfnamefont {M.}~\bibnamefont
  {Horodecki}}, \bibinfo {author} {\bibfnamefont {J.}~\bibnamefont
  {Oppenheim}}, \ and\ \bibinfo {author} {\bibfnamefont {A.}~\bibnamefont
  {Winter}},\ }\href {\doibase 10.1038/nature03909} {\bibfield  {journal}
  {\bibinfo  {journal} {Nature}\ }\textbf {\bibinfo {volume} {436}},\ \bibinfo
  {pages} {673} (\bibinfo {year} {2005})}\BibitemShut {NoStop}%
\bibitem [{\citenamefont {Hayden}(2011)}]{Hayden11}%
  \BibitemOpen
  \bibfield  {author} {\bibinfo {author} {\bibfnamefont {P.}~\bibnamefont
  {Hayden}},\ }\href@noop {} {\bibfield  {journal} {\bibinfo  {journal}
  {Tutorial QIP Singapore}\ } (\bibinfo {year} {2011})}\BibitemShut {NoStop}%
\bibitem [{\citenamefont {Horodecki}\ \emph {et~al.}(2007)\citenamefont
  {Horodecki}, \citenamefont {Oppenheim},\ and\ \citenamefont
  {Winter}}]{horodecki2007quantum}%
  \BibitemOpen
  \bibfield  {author} {\bibinfo {author} {\bibfnamefont {M.}~\bibnamefont
  {Horodecki}}, \bibinfo {author} {\bibfnamefont {J.}~\bibnamefont
  {Oppenheim}}, \ and\ \bibinfo {author} {\bibfnamefont {A.}~\bibnamefont
  {Winter}},\ }\href@noop {} {\bibfield  {journal} {\bibinfo  {journal}
  {Communications in Mathematical Physics}\ }\textbf {\bibinfo {volume}
  {269}},\ \bibinfo {pages} {107} (\bibinfo {year} {2007})}\BibitemShut
  {NoStop}%
\bibitem [{\citenamefont {Abeyesinghe}\ \emph {et~al.}(2009)\citenamefont
  {Abeyesinghe}, \citenamefont {Devetak}, \citenamefont {Hayden},\ and\
  \citenamefont {Winter}}]{abeyesinghe2009mother}%
  \BibitemOpen
  \bibfield  {author} {\bibinfo {author} {\bibfnamefont {A.}~\bibnamefont
  {Abeyesinghe}}, \bibinfo {author} {\bibfnamefont {I.}~\bibnamefont
  {Devetak}}, \bibinfo {author} {\bibfnamefont {P.}~\bibnamefont {Hayden}}, \
  and\ \bibinfo {author} {\bibfnamefont {A.}~\bibnamefont {Winter}},\
  }\href@noop {} {\bibfield  {journal} {\bibinfo  {journal} {Proceedings of the
  Royal Society A}\ }\textbf {\bibinfo {volume} {465}},\ \bibinfo {pages}
  {2537} (\bibinfo {year} {2009})}\BibitemShut {NoStop}%
\bibitem [{\citenamefont {Luo}\ and\ \citenamefont {Devetak}(0109)}]{Luo09}%
  \BibitemOpen
  \bibfield  {author} {\bibinfo {author} {\bibfnamefont {Z.}~\bibnamefont
  {Luo}}\ and\ \bibinfo {author} {\bibfnamefont {I.}~\bibnamefont {Devetak}},\
  }\href@noop {} {\bibfield  {journal} {\bibinfo  {journal} {Information
  Theory, IEEE Transactions on}\ }\textbf {\bibinfo {volume} {55}},\ \bibinfo
  {pages} {1331} (\bibinfo {year} {20109})}\BibitemShut {NoStop}%
\bibitem [{\citenamefont {Yard}\ and\ \citenamefont
  {Devetak}(2009)}]{yard2009optimal}%
  \BibitemOpen
  \bibfield  {author} {\bibinfo {author} {\bibfnamefont {J.~T.}\ \bibnamefont
  {Yard}}\ and\ \bibinfo {author} {\bibfnamefont {I.}~\bibnamefont {Devetak}},\
  }\href@noop {} {\bibfield  {journal} {\bibinfo  {journal} {Information
  Theory, IEEE Transactions on}\ }\textbf {\bibinfo {volume} {55}},\ \bibinfo
  {pages} {5339} (\bibinfo {year} {2009})}\BibitemShut {NoStop}%
\bibitem [{\citenamefont {Devetak}\ and\ \citenamefont
  {Yard}(2008)}]{Devetak08}%
  \BibitemOpen
  \bibfield  {author} {\bibinfo {author} {\bibfnamefont {I.}~\bibnamefont
  {Devetak}}\ and\ \bibinfo {author} {\bibfnamefont {J.}~\bibnamefont {Yard}},\
  }\href@noop {} {\bibfield  {journal} {\bibinfo  {journal} {Physical Review
  Letters}\ }\textbf {\bibinfo {volume} {100}},\ \bibinfo {pages} {230501}
  (\bibinfo {year} {2008})}\BibitemShut {NoStop}%
\bibitem [{\citenamefont {Hayden}\ \emph {et~al.}(2008)\citenamefont {Hayden},
  \citenamefont {Horodecki}, \citenamefont {Winter},\ and\ \citenamefont
  {Yard}}]{HHWY08}%
  \BibitemOpen
  \bibfield  {author} {\bibinfo {author} {\bibfnamefont {P.}~\bibnamefont
  {Hayden}}, \bibinfo {author} {\bibfnamefont {M.}~\bibnamefont {Horodecki}},
  \bibinfo {author} {\bibfnamefont {A.}~\bibnamefont {Winter}}, \ and\ \bibinfo
  {author} {\bibfnamefont {J.}~\bibnamefont {Yard}},\ }\href@noop {} {\bibfield
   {journal} {\bibinfo  {journal} {Open Systems and Information Dynamics}\
  }\textbf {\bibinfo {volume} {15}},\ \bibinfo {pages} {7} (\bibinfo {year}
  {2008})}\BibitemShut {NoStop}%
\bibitem [{\citenamefont {Dupuis}(2009)}]{fred-these}%
  \BibitemOpen
  \bibfield  {author} {\bibinfo {author} {\bibfnamefont {F.}~\bibnamefont
  {Dupuis}},\ }\emph {\bibinfo {title} {The decoupling approach to quantum
  information theory}},\ \href@noop {} {Ph.D. thesis},\ \bibinfo  {school}
  {Universit{\'e} de Montr{\'e}al} (\bibinfo {year} {2009})\BibitemShut
  {NoStop}%
\bibitem [{\citenamefont {Bennett}\ \emph {et~al.}(2014)\citenamefont
  {Bennett}, \citenamefont {Devetak}, \citenamefont {Harrow}, \citenamefont
  {Shor},\ and\ \citenamefont {Winter}}]{Bennett06}%
  \BibitemOpen
  \bibfield  {author} {\bibinfo {author} {\bibfnamefont {C.~H.}\ \bibnamefont
  {Bennett}}, \bibinfo {author} {\bibfnamefont {I.}~\bibnamefont {Devetak}},
  \bibinfo {author} {\bibfnamefont {A.~W.}\ \bibnamefont {Harrow}}, \bibinfo
  {author} {\bibfnamefont {P.~W.}\ \bibnamefont {Shor}}, \ and\ \bibinfo
  {author} {\bibfnamefont {A.}~\bibnamefont {Winter}},\ }\href@noop {}
  {\bibfield  {journal} {\bibinfo  {journal} {Information Theory, IEEE
  Transactions on}\ }\textbf {\bibinfo {volume} {60}},\ \bibinfo {pages} {2926}
  (\bibinfo {year} {2014})}\BibitemShut {NoStop}%
\bibitem [{\citenamefont {Berta}\ \emph {et~al.}(2011)\citenamefont {Berta},
  \citenamefont {Christandl},\ and\ \citenamefont {Renner}}]{berta2011quantum}%
  \BibitemOpen
  \bibfield  {author} {\bibinfo {author} {\bibfnamefont {M.}~\bibnamefont
  {Berta}}, \bibinfo {author} {\bibfnamefont {M.}~\bibnamefont {Christandl}}, \
  and\ \bibinfo {author} {\bibfnamefont {R.}~\bibnamefont {Renner}},\
  }\href@noop {} {\bibfield  {journal} {\bibinfo  {journal} {Communications in
  Mathematical Physics}\ }\textbf {\bibinfo {volume} {306}},\ \bibinfo {pages}
  {579} (\bibinfo {year} {2011})}\BibitemShut {NoStop}%
\bibitem [{\citenamefont {Berta}\ \emph {et~al.}(2014)\citenamefont {Berta},
  \citenamefont {Fawzi},\ and\ \citenamefont {Wehner}}]{Berta14}%
  \BibitemOpen
  \bibfield  {author} {\bibinfo {author} {\bibfnamefont {M.}~\bibnamefont
  {Berta}}, \bibinfo {author} {\bibfnamefont {O.}~\bibnamefont {Fawzi}}, \ and\
  \bibinfo {author} {\bibfnamefont {S.}~\bibnamefont {Wehner}},\ }\href@noop {}
  {\bibfield  {journal} {\bibinfo  {journal} {Information Theory, IEEE
  Transactions on}\ }\textbf {\bibinfo {volume} {60}},\ \bibinfo {pages} {1168}
  (\bibinfo {year} {2014})}\BibitemShut {NoStop}%
\bibitem [{\citenamefont {Dupuis}\ \emph {et~al.}(2014)\citenamefont {Dupuis},
  \citenamefont {Berta}, \citenamefont {Wullschleger},\ and\ \citenamefont
  {Renner}}]{dupuis2014one}%
  \BibitemOpen
  \bibfield  {author} {\bibinfo {author} {\bibfnamefont {F.}~\bibnamefont
  {Dupuis}}, \bibinfo {author} {\bibfnamefont {M.}~\bibnamefont {Berta}},
  \bibinfo {author} {\bibfnamefont {J.}~\bibnamefont {Wullschleger}}, \ and\
  \bibinfo {author} {\bibfnamefont {R.}~\bibnamefont {Renner}},\ }\href@noop {}
  {\bibfield  {journal} {\bibinfo  {journal} {Communications in Mathematical
  Physics}\ }\textbf {\bibinfo {volume} {328}},\ \bibinfo {pages} {251}
  (\bibinfo {year} {2014})}\BibitemShut {NoStop}%
\bibitem [{\citenamefont {del Rio}\ \emph {et~al.}(2011)\citenamefont {del
  Rio}, \citenamefont {\AA{}berg}, \citenamefont {Renner}, \citenamefont
  {Dahlsten},\ and\ \citenamefont {Vedral}}]{RARDV10}%
  \BibitemOpen
  \bibfield  {author} {\bibinfo {author} {\bibfnamefont {L.}~\bibnamefont {del
  Rio}}, \bibinfo {author} {\bibfnamefont {J.}~\bibnamefont {\AA{}berg}},
  \bibinfo {author} {\bibfnamefont {R.}~\bibnamefont {Renner}}, \bibinfo
  {author} {\bibfnamefont {O.}~\bibnamefont {Dahlsten}}, \ and\ \bibinfo
  {author} {\bibfnamefont {V.}~\bibnamefont {Vedral}},\ }\href@noop {}
  {\bibfield  {journal} {\bibinfo  {journal} {Nature}\ }\textbf {\bibinfo
  {volume} {474}},\ \bibinfo {pages} {61} (\bibinfo {year} {2011})}\BibitemShut
  {NoStop}%
\bibitem [{\citenamefont {Aberg}(2013)}]{Aberg13}%
  \BibitemOpen
  \bibfield  {author} {\bibinfo {author} {\bibfnamefont {J.}~\bibnamefont
  {Aberg}},\ }\href@noop {} {\bibfield  {journal} {\bibinfo  {journal} {Nature
  Communications}\ }\textbf {\bibinfo {volume} {4}},\ \bibinfo {pages} {1925}
  (\bibinfo {year} {2013})}\BibitemShut {NoStop}%
\bibitem [{\citenamefont {Hutter}(2011)}]{diploma-hutter}%
  \BibitemOpen
  \bibfield  {author} {\bibinfo {author} {\bibfnamefont {A.}~\bibnamefont
  {Hutter}},\ }\emph {\bibinfo {title} {Understanding Thermalization from
  Decoupling}},\ \href@noop {} {Master's thesis},\ \bibinfo  {school} {ETH
  Zurich} (\bibinfo {year} {2011})\BibitemShut {NoStop}%
\bibitem [{\citenamefont {Chaves}\ and\ \citenamefont
  {Fritz}(2012)}]{chaves2012entropic}%
  \BibitemOpen
  \bibfield  {author} {\bibinfo {author} {\bibfnamefont {R.}~\bibnamefont
  {Chaves}}\ and\ \bibinfo {author} {\bibfnamefont {T.}~\bibnamefont {Fritz}},\
  }\href@noop {} {\bibfield  {journal} {\bibinfo  {journal} {Physical Review
  A}\ }\textbf {\bibinfo {volume} {85}},\ \bibinfo {pages} {032113} (\bibinfo
  {year} {2012})}\BibitemShut {NoStop}%
\bibitem [{\citenamefont {Brand{\~a}o}\ and\ \citenamefont
  {Horodecki}(2013)}]{Brandao12}%
  \BibitemOpen
  \bibfield  {author} {\bibinfo {author} {\bibfnamefont {F.~G.}\ \bibnamefont
  {Brand{\~a}o}}\ and\ \bibinfo {author} {\bibfnamefont {M.}~\bibnamefont
  {Horodecki}},\ }\href@noop {} {\bibfield  {journal} {\bibinfo  {journal}
  {Nature Physics}\ }\textbf {\bibinfo {volume} {9}},\ \bibinfo {pages} {721}
  (\bibinfo {year} {2013})}\BibitemShut {NoStop}%
\bibitem [{\citenamefont {Hayden}\ and\ \citenamefont
  {Preskill}(2007)}]{HayPre07}%
  \BibitemOpen
  \bibfield  {author} {\bibinfo {author} {\bibfnamefont {P.}~\bibnamefont
  {Hayden}}\ and\ \bibinfo {author} {\bibfnamefont {J.}~\bibnamefont
  {Preskill}},\ }\href@noop {} {\bibfield  {journal} {\bibinfo  {journal}
  {Journal of High Energy Physics}\ }\textbf {\bibinfo {volume} {07}},\
  \bibinfo {pages} {120} (\bibinfo {year} {2007})}\BibitemShut {NoStop}%
\bibitem [{\citenamefont {Pirandola}\ and\ \citenamefont
  {Zyczkowski}(2013)}]{braunstein-zyczkowski}%
  \BibitemOpen
  \bibfield  {author} {\bibinfo {author} {\bibfnamefont {S.~L. B.~S.}\
  \bibnamefont {Pirandola}}\ and\ \bibinfo {author} {\bibfnamefont
  {K.}~\bibnamefont {Zyczkowski}},\ }\href@noop {} {\bibfield  {journal}
  {\bibinfo  {journal} {Physical Review Letters}\ }\textbf {\bibinfo {volume}
  {110}},\ \bibinfo {pages} {101301} (\bibinfo {year} {2013})}\BibitemShut
  {NoStop}%
\bibitem [{\citenamefont {Braunstein}\ and\ \citenamefont
  {Pati}(2007)}]{braunstein-pati}%
  \BibitemOpen
  \bibfield  {author} {\bibinfo {author} {\bibfnamefont {S.~L.}\ \bibnamefont
  {Braunstein}}\ and\ \bibinfo {author} {\bibfnamefont {A.~K.}\ \bibnamefont
  {Pati}},\ }\href@noop {} {\bibfield  {journal} {\bibinfo  {journal} {Physical
  Review Letters}\ }\textbf {\bibinfo {volume} {98}},\ \bibinfo {pages}
  {080502} (\bibinfo {year} {2007})}\BibitemShut {NoStop}%
\bibitem [{\citenamefont {Brand{\~a}o}\ \emph {et~al.}(2011)\citenamefont
  {Brand{\~a}o}, \citenamefont {Christandl},\ and\ \citenamefont
  {Yard}}]{brandao2011faithful}%
  \BibitemOpen
  \bibfield  {author} {\bibinfo {author} {\bibfnamefont {F.~G.}\ \bibnamefont
  {Brand{\~a}o}}, \bibinfo {author} {\bibfnamefont {M.}~\bibnamefont
  {Christandl}}, \ and\ \bibinfo {author} {\bibfnamefont {J.}~\bibnamefont
  {Yard}},\ }\href@noop {} {\bibfield  {journal} {\bibinfo  {journal}
  {Communications in Mathematical Physics}\ }\textbf {\bibinfo {volume}
  {306}},\ \bibinfo {pages} {805} (\bibinfo {year} {2011})}\BibitemShut
  {NoStop}%
\bibitem [{\citenamefont {Tomamichel}(2016)}]{mybook}%
  \BibitemOpen
  \bibfield  {author} {\bibinfo {author} {\bibfnamefont {M.}~\bibnamefont
  {Tomamichel}},\ }\href@noop {} {\emph {\bibinfo {title} {{Quantum Information
  Processing with Finite Resources --- Mathematical Foundations}}}}\ (\bibinfo
  {publisher} {Springer International Publishing},\ \bibinfo {year}
  {2016})\BibitemShut {NoStop}%
\bibitem [{Note1()}]{Note1}%
  \BibitemOpen
  \bibinfo {note} {More precisely, this minimum is taken over sub-normalized
  states, see supplemental material.}\BibitemShut {Stop}%
\bibitem [{\citenamefont {Ciganovic}\ \emph {et~al.}(2014)\citenamefont
  {Ciganovic}, \citenamefont {Beaudry},\ and\ \citenamefont
  {Renner}}]{ciganovic2014smooth}%
  \BibitemOpen
  \bibfield  {author} {\bibinfo {author} {\bibfnamefont {N.}~\bibnamefont
  {Ciganovic}}, \bibinfo {author} {\bibfnamefont {N.~J.}\ \bibnamefont
  {Beaudry}}, \ and\ \bibinfo {author} {\bibfnamefont {R.}~\bibnamefont
  {Renner}},\ }\href@noop {} {\bibfield  {journal} {\bibinfo  {journal}
  {Information Theory, IEEE Transactions on}\ }\textbf {\bibinfo {volume}
  {60}},\ \bibinfo {pages} {1573} (\bibinfo {year} {2014})}\BibitemShut
  {NoStop}%
\bibitem [{\citenamefont {DiVincenzo}\ \emph {et~al.}(2004)\citenamefont
  {DiVincenzo}, \citenamefont {Horodecki}, \citenamefont {Leung}, \citenamefont
  {Smolin},\ and\ \citenamefont {Terhal}}]{divincenzo04}%
  \BibitemOpen
  \bibfield  {author} {\bibinfo {author} {\bibfnamefont {D.~P.}\ \bibnamefont
  {DiVincenzo}}, \bibinfo {author} {\bibfnamefont {M.}~\bibnamefont
  {Horodecki}}, \bibinfo {author} {\bibfnamefont {D.~W.}\ \bibnamefont
  {Leung}}, \bibinfo {author} {\bibfnamefont {J.~A.}\ \bibnamefont {Smolin}}, \
  and\ \bibinfo {author} {\bibfnamefont {B.~M.}\ \bibnamefont {Terhal}},\
  }\href {\doibase 10.1103/PhysRevLett.92.067902} {\bibfield  {journal}
  {\bibinfo  {journal} {Phys. Rev. Lett.}\ }\textbf {\bibinfo {volume} {92}},\
  \bibinfo {pages} {067902} (\bibinfo {year} {2004})}\BibitemShut {NoStop}%
\bibitem [{Note2()}]{Note2}%
  \BibitemOpen
  \bibinfo {note} {For the definition~\protect \textup {\hbox {\mathsurround
  \z@ \protect \normalfont (\ignorespaces \ref {eq:Imax2}\unskip \@@italiccorr
  )}}, it is convention in the literature to write $E;A$ (and not
  $A;E$).}\BibitemShut {Stop}%
\bibitem [{\citenamefont {Devetak}\ \emph {et~al.}(2008)\citenamefont
  {Devetak}, \citenamefont {Harrow},\ and\ \citenamefont
  {Winter}}]{devetak2008resource}%
  \BibitemOpen
  \bibfield  {author} {\bibinfo {author} {\bibfnamefont {I.}~\bibnamefont
  {Devetak}}, \bibinfo {author} {\bibfnamefont {A.}~\bibnamefont {Harrow}}, \
  and\ \bibinfo {author} {\bibfnamefont {A.}~\bibnamefont {Winter}},\
  }\href@noop {} {\bibfield  {journal} {\bibinfo  {journal} {IEEE Transactions
  on Information Theory}\ }\textbf {\bibinfo {volume} {54}},\ \bibinfo {pages}
  {4587} (\bibinfo {year} {2008})}\BibitemShut {NoStop}%
\bibitem [{\citenamefont {Datta}\ and\ \citenamefont {Hsieh}(2011)}]{datta11}%
  \BibitemOpen
  \bibfield  {author} {\bibinfo {author} {\bibfnamefont {N.}~\bibnamefont
  {Datta}}\ and\ \bibinfo {author} {\bibfnamefont {M.-H.}\ \bibnamefont
  {Hsieh}},\ }\href@noop {} {\bibfield  {journal} {\bibinfo  {journal} {New
  Journal of Physics}\ }\textbf {\bibinfo {volume} {13}},\ \bibinfo {pages}
  {093042} (\bibinfo {year} {2011})}\BibitemShut {NoStop}%
\bibitem [{\citenamefont {van Dam}\ and\ \citenamefont
  {Hayden}(2003)}]{van2003universal}%
  \BibitemOpen
  \bibfield  {author} {\bibinfo {author} {\bibfnamefont {W.}~\bibnamefont {van
  Dam}}\ and\ \bibinfo {author} {\bibfnamefont {P.}~\bibnamefont {Hayden}},\
  }\href@noop {} {\bibfield  {journal} {\bibinfo  {journal} {Physical Review
  A}\ }\textbf {\bibinfo {volume} {67}},\ \bibinfo {pages} {060302} (\bibinfo
  {year} {2003})}\BibitemShut {NoStop}%
\bibitem [{Note3()}]{Note3}%
  \BibitemOpen
  \bibinfo {note} {The term $\protect \qopname \relax o{log}\protect \qopname
  \relax o{log}|A|$ can be improved to be logarithmic in the smooth
  max-information, when accepting a slightly worse leading order
  term.}\BibitemShut {Stop}%
\bibitem [{\citenamefont {Anshu}\ \emph {et~al.}(2014)\citenamefont {Anshu},
  \citenamefont {Devabathini},\ and\ \citenamefont {Jain}}]{anshu2014near}%
  \BibitemOpen
  \bibfield  {author} {\bibinfo {author} {\bibfnamefont {A.}~\bibnamefont
  {Anshu}}, \bibinfo {author} {\bibfnamefont {V.~K.}\ \bibnamefont
  {Devabathini}}, \ and\ \bibinfo {author} {\bibfnamefont {R.}~\bibnamefont
  {Jain}},\ }\href@noop {} {\bibfield  {journal} {\bibinfo  {journal} {preprint
  arXiv:1410.3031}\ } (\bibinfo {year} {2014})}\BibitemShut {NoStop}%
\bibitem [{Note4()}]{Note4}%
  \BibitemOpen
  \bibinfo {note} {A partial derandomization can be achieved using
  (approximate) unitary 2-designs \cite {szehr2013decoupling}.}\BibitemShut
  {Stop}%
\bibitem [{\citenamefont {Tomamichel}\ and\ \citenamefont
  {Hayashi}(2013)}]{tomamichel2013hierarchy}%
  \BibitemOpen
  \bibfield  {author} {\bibinfo {author} {\bibfnamefont {M.}~\bibnamefont
  {Tomamichel}}\ and\ \bibinfo {author} {\bibfnamefont {M.}~\bibnamefont
  {Hayashi}},\ }\href@noop {} {\bibfield  {journal} {\bibinfo  {journal}
  {Information Theory, IEEE Transactions on}\ }\textbf {\bibinfo {volume}
  {59}},\ \bibinfo {pages} {7693} (\bibinfo {year} {2013})}\BibitemShut
  {NoStop}%
\bibitem [{\citenamefont {Groisman}\ \emph {et~al.}(2005)\citenamefont
  {Groisman}, \citenamefont {Popescu},\ and\ \citenamefont
  {Winter}}]{groisman2005quantum}%
  \BibitemOpen
  \bibfield  {author} {\bibinfo {author} {\bibfnamefont {B.}~\bibnamefont
  {Groisman}}, \bibinfo {author} {\bibfnamefont {S.}~\bibnamefont {Popescu}}, \
  and\ \bibinfo {author} {\bibfnamefont {A.}~\bibnamefont {Winter}},\
  }\href@noop {} {\bibfield  {journal} {\bibinfo  {journal} {Physical Review
  A}\ }\textbf {\bibinfo {volume} {72}},\ \bibinfo {pages} {032317} (\bibinfo
  {year} {2005})}\BibitemShut {NoStop}%
\bibitem [{\citenamefont {Landauer}(1961)}]{landauer1961irreversibility}%
  \BibitemOpen
  \bibfield  {author} {\bibinfo {author} {\bibfnamefont {R.}~\bibnamefont
  {Landauer}},\ }\href@noop {} {\bibfield  {journal} {\bibinfo  {journal} {IBM
  Journal of Reasearch and Development}\ }\textbf {\bibinfo {volume} {5}},\
  \bibinfo {pages} {183} (\bibinfo {year} {1961})}\BibitemShut {NoStop}%
\bibitem [{\citenamefont {Harrow}(2009)}]{harrow2009entanglement}%
  \BibitemOpen
  \bibfield  {author} {\bibinfo {author} {\bibfnamefont {A.~W.}\ \bibnamefont
  {Harrow}},\ }\href@noop {} {\bibfield  {journal} {\bibinfo  {journal} {Proc.
  XVI Int. Cong. Math. Phys}\ }\textbf {\bibinfo {volume} {536}} (\bibinfo
  {year} {2009})}\BibitemShut {NoStop}%
\bibitem [{\citenamefont {Uhlmann}(1985)}]{uhlmann85}%
  \BibitemOpen
  \bibfield  {author} {\bibinfo {author} {\bibfnamefont {A.}~\bibnamefont
  {Uhlmann}},\ }\href@noop {} {\bibfield  {journal} {\bibinfo  {journal}
  {Annals of Physics}\ }\textbf {\bibinfo {volume} {497}},\ \bibinfo {pages}
  {524} (\bibinfo {year} {1985})}\BibitemShut {NoStop}%
\bibitem [{\citenamefont {Berta}\ \emph {et~al.}(2016)\citenamefont {Berta},
  \citenamefont {Christandl},\ and\ \citenamefont {Touchette}}]{berta16}%
  \BibitemOpen
  \bibfield  {author} {\bibinfo {author} {\bibfnamefont {M.}~\bibnamefont
  {Berta}}, \bibinfo {author} {\bibfnamefont {M.}~\bibnamefont {Christandl}}, \
  and\ \bibinfo {author} {\bibfnamefont {D.}~\bibnamefont {Touchette}},\
  }\href@noop {} {\bibfield  {journal} {\bibinfo  {journal} {Information
  Theory, IEEE Transactions on}\ }\textbf {\bibinfo {volume} {62}},\ \bibinfo
  {pages} {1425} (\bibinfo {year} {2016})}\BibitemShut {NoStop}%
\bibitem [{\citenamefont {Datta}\ \emph {et~al.}(2014)\citenamefont {Datta},
  \citenamefont {Hsieh},\ and\ \citenamefont {Oppenheim}}]{Datta14}%
  \BibitemOpen
  \bibfield  {author} {\bibinfo {author} {\bibfnamefont {N.}~\bibnamefont
  {Datta}}, \bibinfo {author} {\bibfnamefont {M.-H.}\ \bibnamefont {Hsieh}}, \
  and\ \bibinfo {author} {\bibfnamefont {J.}~\bibnamefont {Oppenheim}},\
  }\href@noop {} {\bibfield  {journal} {\bibinfo  {journal} {preprint
  arXiv:1409.4352}\ } (\bibinfo {year} {2014})}\BibitemShut {NoStop}%
\bibitem [{\citenamefont {Szehr}\ \emph {et~al.}(2013)\citenamefont {Szehr},
  \citenamefont {Dupuis}, \citenamefont {Tomamichel},\ and\ \citenamefont
  {Renner}}]{szehr2013decoupling}%
  \BibitemOpen
  \bibfield  {author} {\bibinfo {author} {\bibfnamefont {O.}~\bibnamefont
  {Szehr}}, \bibinfo {author} {\bibfnamefont {F.}~\bibnamefont {Dupuis}},
  \bibinfo {author} {\bibfnamefont {M.}~\bibnamefont {Tomamichel}}, \ and\
  \bibinfo {author} {\bibfnamefont {R.}~\bibnamefont {Renner}},\ }\href@noop {}
  {\bibfield  {journal} {\bibinfo  {journal} {New Journal of Physics}\ }\textbf
  {\bibinfo {volume} {15}},\ \bibinfo {pages} {053022} (\bibinfo {year}
  {2013})}\BibitemShut {NoStop}%
\bibitem [{\citenamefont {Tomamichel}\ \emph {et~al.}(2010)\citenamefont
  {Tomamichel}, \citenamefont {Colbeck},\ and\ \citenamefont
  {Renner}}]{tomamichel2010duality}%
  \BibitemOpen
  \bibfield  {author} {\bibinfo {author} {\bibfnamefont {M.}~\bibnamefont
  {Tomamichel}}, \bibinfo {author} {\bibfnamefont {R.}~\bibnamefont {Colbeck}},
  \ and\ \bibinfo {author} {\bibfnamefont {R.}~\bibnamefont {Renner}},\
  }\href@noop {} {\bibfield  {journal} {\bibinfo  {journal} {Information
  Theory, IEEE Transactions on}\ }\textbf {\bibinfo {volume} {56}},\ \bibinfo
  {pages} {4674} (\bibinfo {year} {2010})}\BibitemShut {NoStop}%
\bibitem [{Note5()}]{Note5}%
  \BibitemOpen
  \bibinfo {note} {It is shown in \cite {tomamichel2010duality} that the
  generalized trace distance and the generalized purified distance are
  metrics.}\BibitemShut {Stop}%
\bibitem [{\citenamefont {Nielsen}\ and\ \citenamefont
  {Chuang}(2010)}]{nielsen2010quantum}%
  \BibitemOpen
  \bibfield  {author} {\bibinfo {author} {\bibfnamefont {M.~A.}\ \bibnamefont
  {Nielsen}}\ and\ \bibinfo {author} {\bibfnamefont {I.~L.}\ \bibnamefont
  {Chuang}},\ }\href@noop {} {\emph {\bibinfo {title} {Quantum computation and
  quantum information}}}\ (\bibinfo  {publisher} {Cambridge university press},\
  \bibinfo {year} {2010})\BibitemShut {NoStop}%
\bibitem [{\citenamefont {Renner}()}]{renner2005security}%
  \BibitemOpen
  \bibfield  {author} {\bibinfo {author} {\bibfnamefont {R.}~\bibnamefont
  {Renner}},\ }\href@noop {} {\enquote {\bibinfo {title} {Security of quantum
  key distribution},}\ }\BibitemShut {NoStop}%
\bibitem [{\citenamefont {K{\"o}nig}\ \emph {et~al.}(2009)\citenamefont
  {K{\"o}nig}, \citenamefont {Renner},\ and\ \citenamefont
  {Schaffner}}]{konig2009operational}%
  \BibitemOpen
  \bibfield  {author} {\bibinfo {author} {\bibfnamefont {R.}~\bibnamefont
  {K{\"o}nig}}, \bibinfo {author} {\bibfnamefont {R.}~\bibnamefont {Renner}}, \
  and\ \bibinfo {author} {\bibfnamefont {C.}~\bibnamefont {Schaffner}},\
  }\href@noop {} {\bibfield  {journal} {\bibinfo  {journal} {Information
  Theory, IEEE Transactions on}\ }\textbf {\bibinfo {volume} {55}},\ \bibinfo
  {pages} {4337} (\bibinfo {year} {2009})}\BibitemShut {NoStop}%
\bibitem [{Note6()}]{Note6}%
  \BibitemOpen
  \bibinfo {note} {This is counterintuitive and due to the fact that we only
  project but do not renormalize the state.}\BibitemShut {Stop}%
\bibitem [{\citenamefont {Renner}\ and\ \citenamefont
  {Wolf}(2004)}]{renner2004smooth}%
  \BibitemOpen
  \bibfield  {author} {\bibinfo {author} {\bibfnamefont {R.}~\bibnamefont
  {Renner}}\ and\ \bibinfo {author} {\bibfnamefont {S.}~\bibnamefont {Wolf}},\
  }in\ \href@noop {} {\emph {\bibinfo {booktitle} {IEEE International Symposium
  on Information Theory}}}\ (\bibinfo {year} {2004})\ pp.\ \bibinfo {pages}
  {233--233}\BibitemShut {NoStop}%
\bibitem [{Note7()}]{Note7}%
  \BibitemOpen
  \bibinfo {note} {The original concept was defined using the trace distance
  instead of the purified distance \cite {van2003universal}. We use the
  purified distance here as it fits our task, the definitions are equivalent up
  to a square according to Supplemental Lemma \ref {lem:equiv}.}\BibitemShut
  {Stop}%
\bibitem [{\citenamefont {Vitanov}\ \emph {et~al.}(2012)\citenamefont
  {Vitanov}, \citenamefont {Dupuis}, \citenamefont {Tomamichel},\ and\
  \citenamefont {Renner}}]{vitanov2012chain}%
  \BibitemOpen
  \bibfield  {author} {\bibinfo {author} {\bibfnamefont {A.}~\bibnamefont
  {Vitanov}}, \bibinfo {author} {\bibfnamefont {F.}~\bibnamefont {Dupuis}},
  \bibinfo {author} {\bibfnamefont {M.}~\bibnamefont {Tomamichel}}, \ and\
  \bibinfo {author} {\bibfnamefont {R.}~\bibnamefont {Renner}},\ }\href@noop {}
  {\bibfield  {journal} {\bibinfo  {journal} {arXiv preprint arXiv:1205.5231}\
  } (\bibinfo {year} {2012})}\BibitemShut {NoStop}%
\bibitem [{Note8()}]{Note8}%
  \BibitemOpen
  \bibinfo {note} {The limit $\delta \to 0$ exists, as the min-entropy term
  that depends on $\delta $ is nondecreasing in $\delta $ and bounded from
  below.}\BibitemShut {Stop}%
\bibitem [{\citenamefont {Umegaki}(1962)}]{umegaki1962conditional}%
  \BibitemOpen
  \bibfield  {author} {\bibinfo {author} {\bibfnamefont {H.}~\bibnamefont
  {Umegaki}},\ }in\ \href@noop {} {\emph {\bibinfo {booktitle} {Kodai
  Mathematical Seminar Reports}}},\ Vol.~\bibinfo {volume} {14}\ (\bibinfo
  {organization} {Department of Mathematics, Tokyo Institute of Technology},\
  \bibinfo {year} {1962})\ pp.\ \bibinfo {pages} {59--85}\BibitemShut {NoStop}%
\end{thebibliography}%

%-----------------------------------------------------------------------------------------------------------------------------------------------------------------------------------------------------------------------------

\onecolumngrid
\newpage
%\appendix

\section{Supplemental Material}

\subsection{Additional notation, definitions and lemmas}

All Hilbert spaces considered here are finite-dimensional. Given a Hilbert space $\hi$ we denote the set of endomorphisms on this Hilbert space by $\End{\hi}$. The set of normalized quantum states on a Hilbert space $\hi$ is denoted by $\St(\hi)=\{\rho\in\End{\hi}|\tr\rho=1, \rho\ge 0\}$, the set of sub-normalized quantum states by $\St_\le(\hi)=\{\rho\in\End{\hi}|\tr\rho\le1, \rho\ge 0\}$. The identity is denoted by $\mathds 1$, and the maximally mixed state by $\tau=\mathds 1/\dim\hi$. The unitary group on this Hilbert space is denoted by $\mathrm{U}(\hi)$. We will make use of two matrix norms, the trace norm and the operator norm, defined by
\begin{align*}
	\|A\|_1&=\tr\sqrt{A^\dagger A}\nonumber\\
	\|A\|_\infty&=\max_{\ket{\phi}\in\hi}\|A\ket{\phi}\|_2,
\end{align*}
for an operator $A\in\End{\hi}$, where $\|\ket{\phi}\|_2=\sqrt{\braket{\phi}{\phi}}$.

%-----------------------------------------------------------------------------------------------------------------------------------------------------------------------------------------------------------------------------

\subsubsection{Distance measures}

We need two different metrics on $\St_\le(\hi)$, the trace distance and the purified distance. These are defined as follows.
\begin{sdefn}[Generalized trace distance and purified distance \cite{tomamichel2010duality}]
	For two sub-normalized quantum states $\rho,\sigma\in\St_{\le}(\hi)$, the trace distance is defined as
	\begin{align*}
		\delta(\rho,\sigma)=\frac 1 2 \left(\|\rho-\sigma\|_1+|\tr(\rho-\sigma)|\right).
	\end{align*}
	%where $\|\cdot\|_1$ is the Schatten 1-norm.
	%\footnote{The Schatten 1-norm is defined as $\|A\|=\tr\sqrt{A^\dagger A}$ for a matrix $A$.}
	Their purified distance is defined as
	\begin{align*}
		P(\rho,\sigma)=\sqrt{1-F(\rho,\sigma)^2},\quad\mathrm{where}\quad F(\rho,\sigma)=\|\sqrt{\rho}\sqrt{\sigma}\|_1 +\sqrt{(1-\tr\rho)(1-\tr\sigma)}
	\end{align*}
	is the \emph{generalized fidelity}. We extend these definitions to apply to pairs of probability distributions by considering the corresponding diagonal density matrices. $B_\varepsilon(\rho)$ denotes the purified distance ball of radius $\varepsilon$ around $\rho$, $B^\tr_\varepsilon(\rho)$ the trace distance ball.
\end{sdefn}
Note that the generalized trace distance coincides with the standard definition for normalized states, and the generalized fidelity coincides with the standard fidelity if at least one of the states is normalized.

The two metrics \footnote{It is shown in \cite{tomamichel2010duality} that the generalized trace distance and the generalized purified distance are metrics.} are equivalent and respect the following inequalities.

\begin{slem}[Equivalence of trace distance and purified distance]\label{lem:equiv}
\begin{align*}
	\delta(\rho,\sigma)\le P(\rho,\sigma)\le\sqrt{2\delta(\rho,\sigma)}
\end{align*}
\end{slem}

Forgetting the eigenbases of two states does not increase their trace distance.
\begin{slem}\cite[Box 11.2]{nielsen2010quantum}\label{lem:spectr}
We have
	\begin{align*}
		\delta(\rho,\sigma)\ge\delta(\spec(\rho),\spec(\sigma)),
	\end{align*}
	where $\spec(A)$ denotes the ordered spectrum of a Hermitian matrix $A$.
\end{slem}

The following Lemma is a direct consequence of Supplemental Lemma \ref{lem:equiv} and Hölder's inequality.

\begin{slem}\label{lem:smallU}
	Let $\rho\in\St(\hi)$ be a quantum state and $U\in\mathrm{U}(\hi)$ a unitary. Then, we have
	\begin{align*}
		P(U\rho U^\dagger,\rho)\le \sqrt{2\left\|U-1\right\|_\infty}.
	\end{align*}
\end{slem}

We also need a lemma about low rank approximations of a quantum state.

\begin{slem}\label{lem:lowrk}
	Let $\rho,\rho'\in\St(\hi_A)$ be quantum states. Then, we have
	\begin{align*}
		P(\rho,\rho')\ge P(\rho, \Pi\rho\Pi/\tr(\Pi\rho)),
	\end{align*}
	where $\Pi$ is the projection onto the support of $\rho'$
\end{slem}
\begin{proof}
	Let $\ket{\rho}_{AB}$ be a purification of $\rho$. Then, we have
	\begin{align*}
		F(\rho,\rho')=\max_{\ket{\rho'}}|\braket{\rho}{\rho'}|=\max_{\ket{\rho'}}|\bra{\rho}\Pi\ket{\rho'}|=F(\Pi\rho\Pi,\rho'),
	\end{align*}
	where the maximum is taken over purifications of $\rho'$ on $AB$. But by the Cauchy-Schwarz inequality the normalized vector with maximum inner product with $\Pi\rho\Pi$ is $\Pi\rho\Pi/\tr(\Pi\rho)$, which implies the claimed inequality.
\end{proof}

%-----------------------------------------------------------------------------------------------------------------------------------------------------------------------------------------------------------------------------

\subsubsection{Entropies}

In this section we collect additional definitions of entropic quantities that are needed in the proofs given in this supplemental material.

In analogy to the conditional mutual information given in terms of the Shannon entropy, the quantum conditional mutual information is defined in terms of the von Neumann entropy.

\begin{sdefn}[Quantum conditional mutual information]
	The \emph{quantum conditional mutual information} of a tripartite state $\rho_{ABC}\in\St(\hi_A\otimes\hi_B\otimes\hi_C)$ is defined as
	\begin{align*}
		I(A;B|C)_\rho=H(AC)_\rho+H(BC)_\rho-H(ABC)_\rho-H(C)_\rho,
	\end{align*}
	where $H(A)_{\rho}=H(\rho_A)=-\tr(\rho_A\log\rho_A)$ denotes the von Neumann entropy.
\end{sdefn}

In addition to the max-mutual information defined in the main paper we use the following one-shot entropic quantities.

\begin{sdefn}[Max-relative entropy]
	The \emph{max-relative entropy} of a state $\rho\in\St_\le(\hi)$ with respect to a state $\sigma\in\St(\hi)$ is defined as
	\begin{align*}
		D_{\max}(\rho\|\sigma)=\min\left\{\lambda\in\R\Big|2^\lambda\sigma\ge\rho\right\}.
	\end{align*}
\end{sdefn}

\begin{sdefn}[Smooth conditional min- and max-entropy, \cite{renner2005security,tomamichel2010duality}]
	The conditional min-entropy of a positive semidefinite matrix $\rho_{AB}\in\End{\hi_{A}\otimes\hi_B}$ is defined as
	\begin{align*}
		H_{\min}(A|B)_\rho&=\max_{\sigma}\max\left\{\lambda\Big|2^{-\lambda}1_A\otimes\sigma_B\ge \rho_{AB}\right\}\\
		&=\max_\sigma\left(-D_{\max}(\rho_{AB}\big\|\mathds 1_A\otimes\sigma_B\right),
	\end{align*}
	where the maximum is taken over all normalized quantum states. The conditional max-entropy is defined as the dual of the conditional min-entropy in the sense that
	\begin{align*}
	H_{\max}(A|B)_\rho=-H_{\min}(A|C)_\rho,
	\end{align*}
	where $\rho_{ABC}$ is a purification of $\rho_{AB}$. The smooth conditional min- and max-entropies are defined by maximizing and minimizing over a ball of sub-normalized states $\tilde\rho_{AB}$, respectively,
	\begin{align*}
		H_{\min}^\varepsilon(A|B)_\rho=\max_{\tilde{\rho}\in B_\varepsilon(\rho)}H_{\min}(A|B)_{\tilde{\rho}},\quad\mathrm{and}\quad H_{\max}^\varepsilon(A|B)_\rho=\min_{\tilde{\rho}\in B_\varepsilon(\rho)}H_{\max}(A|B)_{\tilde{\rho}}.
	\end{align*}
\end{sdefn}

The conditional max-entropy can be expressed in terms of the fidelity.

\begin{slem}\cite[Theorem 3]{konig2009operational}
We have
\begin{align*}
H_{\max}(A|B)_\rho=\max_{\sigma\in\St(\hi_B)}2\log F(\rho_{AB},\mathds{1}_A\otimes\sigma_B).
\end{align*}
\end{slem}

The unconditional min- and max-entropy are defined as their conditional counterparts with a trivial conditioning system.

\begin{slem}\cite{konig2009operational}\label{lem:uncond}
	The min and max-entropy are given by
	\begin{align*}
		H_{\min}(\rho)=-\log\|\rho\|_\infty\quad\mathrm{and}\quad H_{\max}(\rho)=2\log\tr\sqrt{\rho}.
	\end{align*} 
\end{slem}

The min-entropy does not decrease under projections\footnote{This is counterintuitive and due to the fact that we only project but do not renormalize the state.}.

\begin{slem}\label{lem:hminproj}
	Let $\rho_{AB}$ be a bipartite quantum state and $\Pi_A$ a projection on $\hi_A$. Then, we have
	\begin{align*}
		H_{\min}(A|B)_\rho\le H_{\min}(A|B)_{\Pi\rho\Pi}.
	\end{align*}
\end{slem}

\begin{proof}
	Let $\sigma_B$ be a quantum state such that $2^{-H_{\min}(A|B)_\rho}1_A\otimes\sigma_B\ge\rho_{AB}$. Applying $\Pi$ on both sides yields
	\begin{align*}
			2^{-H_{\min}(A|B)_\rho}1_A\otimes\sigma_B\ge 2^{-H_{\min}(A|B)_\rho}\Pi_A\otimes\sigma_B\ge\Pi_A\rho_{AB}\Pi_A.
		\end{align*}
		This is a valid point in the maximization defining $H_{\min}(A|B)_{\Pi\rho\Pi}$, implying the result.
\end{proof}

The fact that min- and max-entropy are invariant under local isometries~\cite[Lemma 13]{tomamichel2010duality}, implies that the max-mutual information has the same property.

\begin{slem}
	For a bipartite quantum state $\rho_{AB}$ and isometries $V_{A\to A'},\ W_{B\to B'}$,
	\begin{align*}
		I_{\max}^{\varepsilon}(A;B)_\rho=I_{\max}^{\varepsilon}(A';B')_{\tilde{\rho}},
	\end{align*}
	where $\tilde{\rho}_{A'B'}=V\otimes W\rho_{AB} V^\dagger\otimes W^\dagger$
\end{slem}

\begin{proof}
	Suppose first that $\rho_A$ is invertible. Then, it follows directly from the definitions of the max-mutual information and the conditional min-entropy, that
	\begin{align*}
			I_{\max}(A;B)_\rho=-H_{\min}(A|B)_{\rho_{B|A}},
	\end{align*}
	where $\rho_{B|A}=\rho_{A}^{-1/2}\rho_{AB}\rho_{A}^{-1/2}$. This implies, together with the case $\varepsilon=0$ of~\cite[Lemma 13]{tomamichel2010duality} that the non-smooth max information is invariant under isometries.
	
	Now, we treat the smooth case. There exists a state $\bar{\rho}_{AB}$ such that $I_{\max}^{\varepsilon}(A;B)_\rho=I_{\max}(A;B)_{\bar\rho}$, i.e.
	\begin{align*}
		I_{\max}^{\varepsilon}(A;B)_\rho=I_{\max}(A;B)_{\bar\rho}=I_{\max}(A';B')_{\tilde{\bar\rho}}\ge I_{\max}^\varepsilon(A';B')_{\tilde{\rho}},
	\end{align*}
	where $\tilde{\bar\rho}_{A'B'}=V\otimes W\bar\rho_{AB} V^\dagger\otimes W^\dagger$.
	The other inequality is proven in a way similar to the one in \cite[Lemma 13]{tomamichel2010duality}. Let $\tau_{A'B'}\in B_\varepsilon(\tilde\rho),\  \eta_{B'}$ be quantum state such that $2^\lambda\tau_{A'}\otimes\eta_{B'}\ge\tau_{A'B'}$, where $\lambda=I_{\max}^\varepsilon(A';B')_{\tilde{\rho}}$. Let $\Pi_V, \Pi_W$ be the projections onto the ranges of $V$ and $W$. It follows that $2^\lambda\bar\tau_{A'}\otimes\bar\eta_{B'}\ge\bar\tau_{A'B'}$, where $\bar\tau_{A'B'}=\Pi_V\otimes\Pi_W\tau_{A'B'}\Pi_V\otimes\Pi_W$ and $\bar\eta_{B'}=\Pi_W\eta_{B'}\Pi_W$. It follows from the fact that the purified distance contracts under projections that $\bar\tau_{A'B'}\in B_\varepsilon(\tilde{\rho})$ and therefore
	\begin{align*}
			I_{\max}^{\varepsilon}(A';B')_{\tilde{\rho}}\ge\min\left\{\lambda\in\R\big|2^\lambda\tilde\tau_{A}\otimes\tilde\eta_{B}\ge\tilde\tau_{AB}\right\}\ge I_{\max}^{\varepsilon}(A;B)_\rho,
		\end{align*}
	where $\tilde{\tau}_{AB}= V^\dagger\otimes W^\dagger\bar\tau_{A'B'} V\otimes W$ and $\tilde\eta_{B}=W^\dagger\bar\eta_{B}W$. The observation that the minimum can be replaced by an infimum over invertible states in the definition of the smooth max-mutual information finishes the proof.
\end{proof}

Tensoring a local ancilla does not change the max-mutual information.

\begin{slem}
	Let $\rho_{AB}$, $\sigma_C$ be quantum states. The smooth max-mutual information is invariant under adding local ancillas,
	\begin{align*}
		I_{\max}^{\varepsilon}(A;B)_\rho=I_{\max}^{\varepsilon}(A;BC)_{\rho\otimes\sigma}
	\end{align*}
\end{slem}

\begin{proof}
	According to~\cite[Lemma B.17]{berta2011quantum} the max-mutual information decreases under local CPTP maps. But both adding and removing an ancilla is such a map, which implies the claimed invariance.
\end{proof}

There are several ways to define the max-mutual information \cite{ciganovic2014smooth}, one of the alternative definitions will be useful for catalytic decoupling.

\begin{sdefn}\label{defn:alt-max-mut}
	An alternative max-mutual information of a quantum state $\rho_{AB}$ is defined by
	\begin{align*}
		I_{\max}(A:B)_{\rho,\rho}=D_{\max}(\rho\|\rho_A\otimes\rho_B).
	\end{align*}
	The smooth version $I_{\max}^\varepsilon(A:B)_{\rho,\rho}$ is defined analogously to $I^\varepsilon_{\max}(A:B)_{\rho}$,
	\begin{align*}
		I_{\max}^\varepsilon(A:B)_{\rho,\rho}=\min_{\tilde{\rho}\in B_\varepsilon(\rho)}I_{\max}(A:B)_{\tilde{\rho},\tilde{\rho}}
	\end{align*}
\end{sdefn}

This alternative definition has some disadvantages, in particular the non-smooth version is not bounded from above for a fixed Hilbert space dimension. The  two different smooth max-mutual informations, however, are quite similar, in particular they can be approximated up to a dimension independent error.

\begin{slem}[\cite{ciganovic2014smooth}, Theorem 3]\label{lem:ciganovic}
	For a bipartite quantum state $\rho_{AB}$,
	\begin{equation}
		I_{\max}^{\varepsilon+2\sqrt{\varepsilon}+\varepsilon'}(A:B)_\rho\lesssim I_{\max}^{\varepsilon+2\sqrt{\varepsilon}+\varepsilon'}(A:B)_{\rho,\rho}\lesssim I_{\max}^{\varepsilon'}(A:B)_\rho,
	\end{equation}
	where the notation $\lesssim$ hides errors of order $\log(1/\varepsilon)$ as in the main text.
\end{slem}

As an auxiliary quantity we also need the unconditional R\'enyi entropy of order $0$.

\begin{sdefn}
	For a quantum state $\rho_A\in\St(\hi_A)$ the R\'enyi entropy of order 0 is defined by
	\begin{align*}
		H_0(A)_\rho=\log\mathrm{rk}(\rho_A),
	\end{align*}
	where $\mathrm{rk}(X)$ denotes the rank of a matrix $X$. Like in the case of the max-entropy, the smoothed version is defined by minimizing over an epsilon ball,
	\begin{align*}
		H_{0}^\varepsilon(A)_\rho=\min_{\tilde{\rho}\in B_\varepsilon(\rho)}H_{0}(A)_{\tilde{\rho}}.
	\end{align*}
\end{sdefn}

The smoothed $0$-entropy is almost equal to the smoothed max-entropy.

\begin{slem}\cite[Lemma 4.3]{renner2004smooth}\label{lem:H0Hmax}
We have
	\begin{align*}
		H_{\max}^{2\varepsilon}(\rho)\le H_0^{2\varepsilon}(A)_\rho\le H_{\max}^\varepsilon(\rho)+2\log(1/\varepsilon).
	\end{align*}
\end{slem}

%-----------------------------------------------------------------------------------------------------------------------------------------------------------------------------------------------------------------------------

\subsection{Examples and proofs}

Here we give proofs for the theorems given and claims made in the main paper, and explicit examples.

%-----------------------------------------------------------------------------------------------------------------------------------------------------------------------------------------------------------------------------

\subsubsection{Catalytic decoupling}

Here we present two proofs Theorem 1 in the main text, the achievability of catalytic decoupling.

The following is the key lemma of \cite{anshu2014near} and called convex split lemma by the authors.

\begin{slem}\cite[Lemma 3.1]{anshu2014near}\label{lem:convsplit}
	Let $\rho\in\St(\hi_{A}\otimes\hi_E)$ and $\sigma\in\St(\hi_E)$ be quantum states, $k=D_{\text{max}}(\rho_{AE}\|\rho_A\otimes\sigma_E)$ and $0<\delta<\frac{1}{6}$. Define
	\begin{align*}
		n=\begin{cases}
			1 & k\le 3\delta\\ \left\lceil\frac{8\cdot 2^{k}\log\left(\frac{k}{\delta}\right)}{\delta^3}\right\rceil&\mathrm{else}
		\end{cases}.
	\end{align*}
	For the state
	\begin{align}\label{eq:taudef}
		\tau_{AE_1...E_n}=\frac{1}{n}\sum_{j=1}^n\rho_{AE_j}\otimes\left(\sigma^{\otimes (n-1)}\right)_{E_{j^c}}
	\end{align}
	$E$ is decoupled from $A$ in the following sense:
	\begin{align*}
		I(A;E_1...E_n)_\tau\le3\delta\quad\text{as well as}\quad P\left(\tau_A\otimes\tau_{E_1...E_n},\tau_{AE_1...E_n}\right)\le\sqrt{6\delta},
	\end{align*}
	where $E_{j^c}$ denotes $\{E_i\}_{i\neq j}$.
\end{slem}

\begin{customthm}{1}[Catalytic decoupling]\label{thm:anc-decoup-supp}
	Let $\hat\rho_{AE}\in\St(\hi_A\otimes \hi_E)$ be a quantum state. Then, for any $0<\delta\le \varepsilon$ catalytic decoupling with error $\varepsilon$ can be achieved with remainder system size
	\begin{align*}
		\log|A_2|\le\frac 1 2 \left(I_{\max}^{\varepsilon-\delta}(E;A)_{\hat\rho}+\left\{\log\log I_{\max}^{\varepsilon-\delta}(E;A)_{\hat\rho}\right\}_+\right)+\mathcal{O}(\log\frac{1}{\delta}),
	\end{align*}
	where we define $\{x\}_+$ to be equal to $x$ if $x\in\R_{\ge0}$ and $0$ otherwise.
\end{customthm}

\begin{proof}
	Let $\gamma=\varepsilon-\delta$. Take $\rho\in B_\gamma(\hat{\rho})$ such that $I_{\max}(E;A)_\rho=I^\gamma_{\max}(E;A)_{\hat\rho}$.
	Let $\sigma_A$ be the minimizer in 
	\begin{align*}
	k=I_{\max}(E;A)_\rho=\min_{\sigma_A\in\St(\hi_A)}\relent{\max}{\big}{\rho_{AE}}{\sigma_A\otimes\rho_E}.
	\end{align*}
	If $k\le \frac{\delta^2}{2}$ the state is already decoupled according to Supplemental Lemma \ref{lem:convsplit} and the statement is trivially true, so let us assume $k>\frac{\delta^2}{2}$.
	We want to use Supplemental Lemma \ref{lem:convsplit} so let
	\begin{align*}
		n=\left\lceil\frac{8\cdot 2^{k}\log\left(\frac{k}{\delta'}\right)}{\delta'^3}\right\rceil
	\end{align*}
	with $\delta'=\frac{\delta^2}{6}$, $\hi_{A'}=\hi_A^{\otimes (n-1)}\otimes\hi_{\bar A}$ with $\hi_{\bar A}\cong\C^n$ and define the state $\tilde{\rho}_{A^{(2)}...A^{(n)}\bar A}=\sigma^{\otimes (n-1)}\otimes\tau_{\bar A}$, where $\tau_{\bar{A}}=\mathds{1}_{\bar A}/|\bar A|$ denotes the maximally mixed state on $\hi_{\bar A}$. We can now define a unitary that permutes the $A$-systems conditioned on $\bar A$ and thus creates an extension of the state $\tau$ from Equation \eqref{eq:taudef} when applied to $\rho_{AE}\otimes \tilde\rho_{A'}$,
	\begin{align*}
	U^{(1)}_{AA'}=\sum_{j=1}^n(1j)_{A^{(1)}...A^{(n)}}\otimes\ketbra{j-1}{j-1}_{A'},
	\end{align*}
	where $(1j)_{A^{(1)}...A^{(n)}}$ is the transposition $(1j)\in S_n$ under the representation $S_n\looparrowright \hi_A^{\otimes n}$ of the symmetric group that acts by permuting the tensor factors, and $(11)=1_{S_n}$. Now, we are almost done, as Supplemental Lemma \ref{lem:convsplit} implies that
	\begin{align*}
	P\left(\xi_{EA^{(1)}...A^{(n)}},\xi_E\otimes\xi_{A^{(1)}...A^{(n)}}\right)\le \delta,
	\end{align*}
	where $\xi=U^{(1)}_{AA'}\rho_{AE}\otimes\tilde\rho_{A'} \left(U^{(1)}_{AA'}\right)^\dagger$. The register $\bar A$, however, is still a factor of two larger than the claimed bound for $|A_2|$. We can win this factor of two by using superdense coding, as $\bar A$ is classical. Let us therefore slightly enlarge $\hi_{\bar A}$ such that $\dim(\hi_{\bar A})=m^2$ for $m=\lceil\sqrt n\rceil$. We now rotate the standard basis of $\hi_M$ into a Bell basis
	\begin{align*}
		\ket{\psi_{kl}}=\frac{1}{\sqrt m}\sum_{s=0}^{m-1}e^{\frac{2\pi i k s}{m}}\ket s\otimes\ket{s+l\mod m}
	\end{align*}
	 of $\hi_{\bar A_1}\otimes\hi_{\bar A_2}$, with $\hi_{\bar A_i}\cong\C^m$. That is done by the unitary
	 \begin{align*}
	 	U^{(2)}_{\bar A}: \hi_{\bar A}\to\hi_{\bar A_1}\otimes\hi_{\bar A_2}\quad\mathrm{with}\quad U^{(2)}_{\bar A}=\sum_{k,l=0}^{m-1}\ketbra{\psi_{kl}}{m k+l}.
	 \end{align*}
	 As $\tr_{\bar A_2}\proj{\psi_{kl}}=\tau_{\bar A_1}$ for all $k,l\in\{0,...,m-1\}$, the unitary $V_{AA'\to A_1A_2}=U^{(2)}_{\bar A}U^{(1)}_{AA'}$ and the definitions $\hi_{A_2}=\hi_{\bar A_2}$ and $\hi_{A_1}=\hi_A^{\otimes n}\otimes\hi_{\bar A_1}$ achieve $P\left(\eta_{A_1 E},\eta_{A_1}\otimes\rho_E\right)\le \delta$, with $\eta=V_{AA'\to A_1A_2}\rho\otimes\tilde{\rho}V_{AA'\to A_1A_2}^\dagger$. Using the triangle inequality for the purified distance we finally arrive at
	 \begin{align*}
	 	P\left(\hat\xi_{A_1E},\eta_{A_1}\otimes\rho_E\right)\le \gamma+\delta=\varepsilon,
	 \end{align*}
	 for $\hat\xi=V_{AA'\to A_1A_2}\hat\rho\otimes\tilde\rho V_{AA'\to A_1A_2}^\dagger$.
	 The size of the remainder system is 
	 \begin{align*}
	 	\log|A_2|=\frac 1 2\log n\le\frac 1 2 \big(I_{\max}^\gamma(E;A)_{\hat\rho}+\left\{\log\log I_{\max}^\gamma(E;A)_{\hat\rho}\right\}_+\big)+\mathcal{O}\left(\log\frac{1}{\delta}\right).
	 \end{align*}
\end{proof}

\begin{rem*}
	Using the alternative definition of the max-mutual information, Supplemental Definition \ref{defn:alt-max-mut}, we can prove in the same way that
	\begin{align*}
		P(\hat\xi_{A_1E}, \eta_{A_1}\otimes\rho_E)\le \varepsilon
	\end{align*}
	with $\hat\xi=V_{AA'\to A_1A_2}\hat\rho_{AE}\otimes\rho_A^{\otimes n} V_{AA'\to A_1A_2}^\dagger$ and $\eta=V_{AA'\to A_1A_2}\rho_{AE}\otimes\rho_A^{\otimes n} V_{AA'\to A_1A_2}^\dagger$ in this case, and $n$ defined in the same way as above, just with $k=I_{\max}^\varepsilon(A:E)_{\hat\rho, \hat{\rho}}$. This achieves a stronger notion of decoupling, as a large part of the catalyst can be approximately handed back in the same state,
	\begin{align*}
		\eta_{A_1}=\rho_{A}^{\otimes n}\otimes \tau_{\bar A_1}.
	\end{align*}
	By Supplemental Lemma \ref{lem:ciganovic} this still implies
	\begin{align*}
		\log|A_2|&=\frac 1 2\log n\le\frac 1 2 \big(I_{\max}^{\varepsilon-\delta}(E;A)_{\hat\rho,\hat\rho}+\left\{\log\log
		I_{\max}^{\varepsilon-\delta}(E;A)_{\hat\rho,\hat\rho}\right\}_+\big)+\mathcal{O}\left(\log\frac{1}{\delta}\right)\\
		&\le\frac 1 2 \big(I_{\max}^{\varepsilon-2\delta-2\sqrt\delta}(E;A)_{\hat\rho}+\left\{\log\log
				I_{\max}^{\varepsilon-2\delta-2\sqrt\delta}(E;A)_{\hat\rho}\right\}_+\big)+\mathcal{O}\left(\log\frac{1}{\delta}\right)\\
				&\le\frac 1 2 \big(I_{\max}^{\varepsilon-\delta'}(E;A)_{\hat\rho}+\left\{\log\log
								I_{\max}^{\varepsilon-\delta'}(E;A)_{\hat\rho}\right\}_+\big)+\mathcal{O}\left(\log\frac{1}{\delta'}\right),
	 \end{align*}
	 having defined $\delta'=2\delta+2\sqrt{\delta}$.
\end{rem*}

The second proof is based on the state splitting protocol in \cite{berta2011quantum}. This uses \emph{embezzling states} \cite{van2003universal}.

\begin{sdefn}[Embezzling state \cite{van2003universal}]
	A state $\ket\mu\in\mathcal{H}_{A}\otimes\hi_B$ is called \emph{universal $(d,\delta)$-embezzling state} if for any state $\ket{\psi}\in\hi_{A'}\otimes\hi_{B'}$ with $\dim \hi_A=\dim\hi_B\le d$ there exists an isometry $V_{\psi,X}:\hi_{X}\to\hi_X\otimes\hi_{X'}$, $X=A,B$ such that\footnote{The original concept was defined using the trace distance instead of the purified distance \cite{van2003universal}. We use the purified distance here as it fits our task, the definitions are equivalent up to a square according to Supplemental Lemma \ref{lem:equiv}.}
	\begin{align*}
	P\left(V_{\psi,A}\otimes V_{\psi,B}\ket\mu,\ket\mu\otimes\ket\psi\right)\le\delta.
	\end{align*}
\end{sdefn}

\begin{sprop}\cite{van2003universal}
Universal $(d,\delta)$-embezzling states exist for all $d$ and $\varepsilon$.
\end{sprop}

The proof also uses the one-shot version of standard decoupling.

\begin{slem}\cite[Theorem 3.1]{berta2011quantum}, \cite[Table 2]{dupuis2014one}\label{cor:bcr-decoup}
	Let $\rho_{AE}\in\St(\hi_A\otimes \hi_E)$ be a quantum state. Then, there exists a decomposition $\hi_A\cong\hi_{A_1}\otimes\hi_{A_2}$ with
	\begin{align*}
		\log(|A_2|)\le \frac{1}{2}\left(\log(|A|)-H_{\min}(A|E)_\rho\right)+2\log\frac 1 \varepsilon+1.
	\end{align*}
	 such that
	 \begin{align*}
	 	P\left(\rho_{A_1E},\frac{1_{A_1}}{|A_1|}\otimes\rho_E\right)\le\varepsilon.
	 \end{align*}
\end{slem}

The difference between the bound given here and the bound from \cite[Theorem 3.1]{berta2011quantum} stems from the fact that we define decoupling using the purified distance.

We include the following alternative proof to show how catalytic decoupling unifies different techniques from one-shot quantum communication. As a first step we prove the following non-smooth theorem.

\begin{sthm}[Non-smooth catalytic decoupling from standard decoupling and embezzling states]\label{thm:nonsmooth-anc-decoup-bcr}
Let $\rho_{AE}\in\St(\hi_A\otimes \hi_E)$ be a quantum state. Then, $\varepsilon$-catalytic decoupling can be achieved with remainder system size
	\begin{align*}
		\!\log|A_2|\!\le\frac{1}{2}I_{\max}(A;E)_{\rho}+\log H_0(A)_{{\rho}}+\mathcal{O}\left(\log\left(\frac 1 \varepsilon\right)\right).
	\end{align*}
	In addition, if we allow for the use of isometries instead of unitaries, the ancilla systems final state is $\varepsilon$ close to its initial state.
\end{sthm}

\begin{proof}
	For notational convenience let $\ket{\rho}_{AER}$ be a purification of $\rho$. 
	Also in slight abuse of notation we replace $\hi_A$ by $\mathrm{supp}(\rho_A)$ so that $|A|\le2^{ H_0(A)_{{\rho}}} $. The idea is to decompose the Hilbert space $\hi_A$ into a direct sum of subspaces where the spectrum of $\rho_A$ is almost flat. Let $Q=\left\lceil\log|A|+2\log\left(\frac 1 \varepsilon\right)-1\right\rceil$ and define the projectors $P_i,\ i=0,...,Q+1$ such that $P_{Q+1}$ projects onto the eigenvectors of $\rho_A$ with eigenvalues in $\left[0,2^{-(Q+1)}\right]$ and $P_i$ projects onto the eigenvectors of $\rho_A$ with eigenvalues in $\left[2^{-(i+1)},2^{-i}\right]$ for $i=0,...,Q$. We can now write the approximate state $\ket{\bar\rho}_{AER}=\frac{1}{\sqrt \alpha}(\mathds 1_A-P_{Q+1})\ket{\rho}_{AER}, \alpha=\tr(\mathds 1_A-P_{Q+1})\rho$ as a superposition of states with almost flat marginal spectra on $A$,
	\begin{align*}
		\ket{\bar\rho}=\sum\sqrt{p_i}\ket{\rho^{(i)}},
	\end{align*}
	 with $p_i=\tr\bar\rho P_i$ and  $\ket{\rho^{(i)}}=\frac{1}{\sqrt{p_i}}P_i\ket{\rho}$. This decomposition corresponds to the direct sum decomposition 
	 \begin{align*}
	 	\hi_A\cong\bigoplus_{i=0}^{Q+1}\hi_{A^{(i)}},
	 \end{align*}
	 where $\hi_{A^{(i)}}=\mathrm{supp}(P_i)$. Note that we have $P(\rho,\bar\rho)=\sqrt{1-\alpha}$ and $$1-\alpha\le |A|2^{-\left(\log|A|+2\log\left(\frac 1 \varepsilon\right)\right)}=\varepsilon^2,$$ i.e. $P(\rho,\bar\rho)\le\varepsilon$. Now, we have a family of states, $\{\rho^{(i)}_{A^{(i)}E}\}$ to each of which we apply Supplemental Lemma \ref{cor:bcr-decoup}. This yields decompositions $\hi_{A^{(i)}}\cong\hi_{A_1^{(i)}}\otimes\hi_{A_2^{(i)}}$ such that
	 \begin{align}\label{eq:flat-decoup}
	 	P\left(\rho^{(i)}_{A^{(i)}_1E},\tau_{A^{(i)}_1}\otimes \rho^{(i)}_E\right)\le\varepsilon
	 \end{align}
	 and
	 \begin{align*}
	 	\log(|A^{(i)}_2|)\ge \frac{1}{2}\left(\log(|A^{(i)}|)+H_{\min}(A|E)_{\rho^{(i)}}\right)+2\log\left(\frac 1 \varepsilon\right)+1,
	 \end{align*}
	 where $\tau_A=\frac{\mathds{1}_A}{|A|}$ is the maximally mixed state on a quantum system $A$.
	 
	At this stage of the protocol the situation can be described as follows. Conditioned on $i$, $A_1^{(i)}$ is decoupled from $E$. If $\rho^{(i)}_E\neq\rho^{(j)}_E$ and $\left|A^{(i)}\right|\neq\left|A^{(j)}\right|$, however, there are still correlations left between $A_1$ and $E$. To get rid of this problem, we hide the maximally mixed states of different dimensions in an embezzling state by "un-embezzling" them. Let us therefore first isometrically embed all these states in the same Hilbert space. To do that, define
	 \begin{align*}
	  	d_2=\max_i\left|A^{(i)}_2\right|\quad\text{and}\quad d_1=\max\left(\max_i\left|A^{(i)}_1\right|, \left\lceil\frac{\left|A^{(Q+1)}\right|}{d_2}\right\rceil\right). 
	 \end{align*}
	 Now, let $\hi_{\tilde A_\alpha}\cong\C^{d_\alpha}$ and choose isometries $U^{(\alpha,i)}_{A^{(i)}_\alpha\to\tilde A_\alpha}$ for $\alpha=1,2$, define $U^{(i)}_{A^{(i)}\to\tilde A_1\otimes \tilde A_2}=U^{(1,i)}_{A^{(i)}_1\to\tilde A_1}\otimes U^{(2,i)}_{A^{(i)}_2\to\tilde A_2}$ for $i=1,...,Q$. In addition, choose an isometry $U^{Q+1}_{A^{(Q+1)}\to \tilde A_1\otimes \tilde A_2}$. Let $\ket{\mu}_{A'B'}\in\hi_{A'}\otimes\hi_{B'}$ be a $(d_1,\varepsilon)$-embezzling state, and let $\sigma_{A'}=\tr_{B'}\proj \mu$. Define the isometries $\bar V^{(i)}_{A'\to A'\tilde A_1}$ that would embezzle a state $\tau^{(i)}_{\tilde A_i}=U^{(1,i)}_{A^{(i)}_1\to\tilde A_1}\tau_{A^{(i)}_1}\left(U^{(1,i)}_{A^{(i)}_1\to\tilde A_1}\right)^\dagger$ from $\sigma_{A'}$. Taking some state $\ket 0_{\tilde A_1}\in\hi_{\tilde A_1}$ we can pad these embezzling isometries to unitaries $V^{(i)}_{A'\tilde A_1}$ such that
	 \begin{align}\label{eq:padembezz}
		 P\left(V^{(i)}_{A'\tilde A_1}\sigma_{A'}\otimes\proj 0_{\tilde A_1}\left(V^{(i)}_{A'\tilde A_1}\right)^\dagger,\sigma_{A'}\otimes \tau^{(i)}_{\tilde A_i}\right)\le\varepsilon.
	 \end{align}
	 We can combine the above isometries and unitaries now to un-embezzle the states that are approximately equal to $\tau_{A^{(i)}_1}$ conditioned on $i$. Define $\tilde{A}_3\cong\C^{Q+1}$ and 
	 \begin{align*}
	 	W^I_{A\to\tilde A_1\tilde A_2\tilde A_3}=\sum_i U^{(i)}_{A^{(i)}\to \tilde A_1\otimes \tilde A_2}P_i\otimes\ket i_{\tilde A_3},\quad W^{II}_{A'\tilde A_1\tilde A_3}=\sum_i \left(V^{(i)}_{A'\tilde A_1}\right)^\dagger\otimes\proj i_{\tilde A_3}.
	 \end{align*}
	 The final state of our decoupling protocol is
	 \begin{align*}
	 	\rho^f_{A_1A_2E}=W^{II}W^I\rho_{AE}\otimes\sigma_{A'} \left(W^I\right)^\dagger\left(W^{II}\right)^\dagger,
	 \end{align*}
	 where we omitted the subscripts of the $V$s for compactness and have defined $A_1=A'\tilde{A}_1$ and $A_2=\tilde A_2\tilde A_3$. Let us show that this protocol actually decouples $A_1$ from $E$. We bound, omitting the subscripts of unitaries and isometries,
	 \begin{align*}
	 	P\left(\rho^f_{A_1E},\sigma_{A'}\otimes\proj 0_{\tilde A_1}\otimes\rho_E\right)
	 	&=\sum_ip_iP\Big(\left(V^{(i)}\right)^\dagger U^{(1,i)}\sigma_{A'}\otimes\rho^{(i)}_{A^{(i)}_1E}\left(U^{(1,i)}\right)^\dagger V^{(i)},\sigma_{A'}\otimes\proj 0_{\tilde A_1}\otimes\rho_E\Big)\nonumber\\
	 	&=\sum_ip_iP\Big( U^{(1,i)}\sigma_{A'}\otimes\rho^{(i)}_{A^{(i)}_1E}\left(U^{(1,i)}\right)^\dagger,V^{(i)}\sigma_{A'}\otimes\proj 0_{\tilde A_1}\otimes\rho_E\left(V^{(i)}\right)^\dagger\Big)\nonumber\\
	 	&\le\sum_ip_iP\Big( U^{(1,i)}\sigma_{A'}\otimes\rho^{(i)}_{A^{(i)}_1E}\left(U^{(1,i)}\right)^\dagger,\sigma_{A'}\otimes\tau^{(i)}_{\tilde A_1}\otimes\rho_E\Big)\nonumber\\
	 	&\quad+\sum_ip_iP\Big(\sigma_{A'}\otimes\tau^{(i)}_{\tilde A_1}\otimes\rho_E,V^{(i)}\sigma_{A'}\otimes\proj 0_{\tilde A_1}\otimes\rho_E\left(V^{(i)}\right)^\dagger\Big)\nonumber\\
	 	&\le\sum_ip_iP\Big( U^{(1,i)}\sigma_{A'}\otimes\rho^{(i)}_{A^{(i)}_1E}\left(U^{(1,i)}\right)^\dagger,\sigma_{A'}\otimes\tau^{(i)}_{\tilde A_1}\otimes\rho_E\Big)+\varepsilon.
	 \end{align*}
	 The first inequality is the triangle inequality, the second one is Equation \eqref{eq:padembezz}. It remains to bound the first summand,
	 \begin{align*}
	 	\sum_ip_iP\Big( U^{(1,i)}\sigma_{A'}\otimes\rho^{(i)}_{A^{(i)}_1E}\left(U^{(1,i)}\right)^\dagger,\sigma_{A'}\otimes\tau^{(i)}_{\tilde A_1}\otimes\rho_E\Big)
	 	&=\sum_ip_iP\Big(\sigma_{A'}\otimes\rho^{(i)}_{A^{(i)}_1E},\sigma_{A'}\otimes\left(\left(U^{(1,i)}\right)^\dagger\tau^{(i)}_{\tilde A_1}U^{(1,i)}\right)\otimes\rho_E\Big)\nonumber\\
	 	&=\sum_ip_i P\Big(\sigma_{A'}\otimes\rho^{(i)}_{A^{(i)}_1E},\sigma_{A'}\otimes\tau_{\tilde A^{(i)}_1}\otimes\rho_E\Big)\nonumber\\
	 	&\le\sum_ip_iP\left(\rho^{(i)}_{A^{(i)}_1E},\tau_{\tilde A^{(i)}_1}\otimes\rho_E\right)\nonumber\\
	 	&\le\varepsilon,
	 \end{align*}
	 where the first inequality is the triangle inequality again, and the second one is Equation \eqref{eq:flat-decoup}. This shows that we achieved $2\varepsilon$-decoupling, i.e.
	 \begin{align}\label{eq:non-smooth-decoup}
	 P\left(\rho^f_{A_1E},\sigma_{A'}\otimes\proj 0_{\tilde A_1}\otimes\rho_E\right)\le2\varepsilon.
	 \end{align}
We also have to bound $\log|A_2|$, i.e. we need to make sure that
 	 \begin{align*}
 	 	\max_{i=1,...,Q}\left(H_0(A)_{\rho^{(i)}}-H_{\min}(A|E)_{\rho^{(i)}}\right)\le I_{\max}(E;A)_\rho+\mathcal{O}\left(\log\left(\frac 1 \varepsilon\right)\right).
 	 \end{align*}
 	 This is shown in \cite{berta2011quantum} in the last part of the proof of Theorem 3.10.
 	 Thereby the size of the remainder system is bounded by
 	 \begin{align*}
 	 	\log|A_2|=\frac{1}{2}I_{\max}(A;E)_{\hat\rho}+\log H_{0}(A)_{\hat{\rho}}+\mathcal{O}\left(\log\left(\frac 1 \varepsilon\right)\right).
 	 \end{align*}
 	 If we only want to use unitaries, we can complete all involved isometries to unitaries by adding an appropriate additional pure ancilla system.
\end{proof}

As an easy corollary we can derive a bound on the remainder system that involves the smooth max-mutual information in a way that is fit for deriving an the asymptotic expansion of Equation (15) in the main text.

\begin{customthm}{1'}[Catalytic decoupling]\label{prop:anc-decoup-bcr}
Let $\hat\rho_{AE}\in\St(\hi_A\otimes \hi_E)$ be a quantum state. Then, $\varepsilon$-catalytic decoupling can be achieved with remainder system size
	\begin{align*}
		\!\log|A_2|\!\le\frac{1}{2}I_{\max}^{\varepsilon-\delta}(A;E)_{\rho}+\log H_0(A)_{{\rho}}+\mathcal{O}\left(\log\left(\frac 1 \delta\right)\right).
	\end{align*}
	In addition, if we allow for the use of isometries instead of unitaries, the ancilla systems final state is $\varepsilon$ close to its initial state.
\end{customthm}
\begin{proof}
	Let $\hat{\rho}\in B_{\eta}(\rho)$ with $\eta=\varepsilon-\delta$ such that
	\begin{equation}
		I_{\max}^{\eta}(A:E)_\rho=I_{\max}(A:E)_{\hat{\rho}}.
	\end{equation}
	Define $\rho'=\Pi\hat{\rho}\Pi$, where $\Pi$ is the orthogonal projector onto the support of $\rho$. It follows from Uhlmann's theorem that $P(\rho,\rho')\le P(\rho,\hat{\rho})$. As the max-mutual information is non-increasing under projections (cf. \cite[Lemma B.19]{berta2011quantum}), it follows that
	\begin{equation}
			I_{\max}^{\eta}(A:E)_\rho=I_{\max}(A:E)_{\rho'}
		\end{equation}
	as well. Applying Supplemental Theorem \ref{thm:nonsmooth-anc-decoup-bcr} to $\rho'$ and an application of the triangle inequality yields the claimed bound.
\end{proof}

If we accept a slightly worse smoothing parameter for the leading order term, i.e. the max-mutual information, we can smooth the second term as well and replace the renyi-0 entropy by the max-entropy.

\begin{cor}
	Let $\rho_{AE}\in\St(\hi_A\otimes \hi_E)$ be a quantum state. Then, $\varepsilon$-catalytic decoupling can be achieved with remainder system size
		\begin{align*}
			\!\log|A_2|\!\le\frac{1}{2}I_{\max}^{\varepsilon'}(A;E)_{\rho}+\log H_{\max}^{{\varepsilon'}^2/2}(A)_{\rho}+\mathcal{O}(\log{\varepsilon'}\!),
		\end{align*}
		where $\varepsilon'=\varepsilon/6$.
\end{cor}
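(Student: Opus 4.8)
\emph{Proof plan.} I would derive this as a corollary of Theorem~\ref{prop:anc-decoup-bcr} together with the comparison of $H_0$ and $H_{\max}$ in Supplemental Lemma~\ref{lem:H0Hmax}. Theorem~\ref{prop:anc-decoup-bcr} already controls $\log|A_2|$ by a \emph{smoothed} max-mutual information, but only by an \emph{un}smoothed R\'enyi-$0$ entropy $\log H_0(A)_\rho$; the task is to smooth that second term too, accepting in exchange a worse smoothing parameter on the first. The idea is to pre-smooth $\rho$ to a nearby state $\bar\rho$ whose $A$-marginal has near-minimal rank, apply Theorem~\ref{prop:anc-decoup-bcr} to $\bar\rho$, and then transport every error back to $\rho$ with the triangle inequality for the purified distance.

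Concretely, I would set $\varepsilon'=\varepsilon/6$, pick a (possibly subnormalized) $\rho'\in B_{{\varepsilon'}^2}(\rho)$ attaining $H_0(A)_{\rho'}=H_0^{{\varepsilon'}^2}(A)_\rho$, and --- to remain within the normalized states required by Theorem~\ref{prop:anc-decoup-bcr} --- replace it, via Supplemental Lemma~\ref{lem:lowrk}, by the normalized state $\bar\rho=\Pi\rho\Pi/\tr(\Pi\rho)$ with $\Pi$ the projector onto $\supp(\rho'_{AE})$; then $P(\rho,\bar\rho)\le P(\rho,\rho')\le{\varepsilon'}^2$ and $\supp(\bar\rho_A)\subseteq\supp(\rho'_A)$, so $H_0(A)_{\bar\rho}\le H_0^{{\varepsilon'}^2}(A)_\rho$. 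Next I would apply Theorem~\ref{prop:anc-decoup-bcr} to $\bar\rho$ with error parameter $\varepsilon_0=2\varepsilon'+{\varepsilon'}^2$ and smoothing parameter $\delta=\varepsilon'$, obtaining an ancilla and a unitary (or isometry) that $\varepsilon_0$-catalytically decouples $\bar\rho$ with $\log|A_2|\le\frac{1}{2}I_{\max}^{\varepsilon'+{\varepsilon'}^2}(A;E)_{\bar\rho}+\log H_0(A)_{\bar\rho}+\mathcal{O}(\log(1/\varepsilon'))$.

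It then remains to feed back three estimates. First, contractivity of the purified distance under the channel that appends the ancilla and applies the decoupling isometry, together with the triangle inequality, upgrades $\varepsilon_0$-catalytic decoupling of $\bar\rho$ to $(\varepsilon_0+{\varepsilon'}^2)$-catalytic decoupling of $\rho$, and $\varepsilon_0+{\varepsilon'}^2=2\varepsilon'+2{\varepsilon'}^2\le 6\varepsilon'=\varepsilon$ since $\varepsilon'\le 1/6$. Second, if $\sigma^*$ realizes $I_{\max}^{\varepsilon'}(A;E)_\rho$ then $P(\sigma^*,\bar\rho)\le P(\sigma^*,\rho)+P(\rho,\bar\rho)\le\varepsilon'+{\varepsilon'}^2$, so $\sigma^*$ lies in the smoothing ball $B_{\varepsilon'+{\varepsilon'}^2}(\bar\rho)$ and hence $I_{\max}^{\varepsilon'+{\varepsilon'}^2}(A;E)_{\bar\rho}\le I_{\max}(A;E)_{\sigma^*}=I_{\max}^{\varepsilon'}(A;E)_\rho$. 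Third, Supplemental Lemma~\ref{lem:H0Hmax} with smoothing parameter ${\varepsilon'}^2/2$ gives $H_0(A)_{\bar\rho}\le H_0^{{\varepsilon'}^2}(A)_\rho\le H_{\max}^{{\varepsilon'}^2/2}(A)_\rho+2\log(2/{\varepsilon'}^2)$, and one takes logarithms. Substituting these three facts into the displayed bound gives the claimed inequality; the constant $6$ is chosen generously so that the constraints $0<\delta\le\varepsilon_0\le 1$, $\varepsilon_0-\delta\ge\varepsilon'+{\varepsilon'}^2$, and $\varepsilon_0+{\varepsilon'}^2\le\varepsilon$ all hold with room to spare.

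The one genuinely delicate point is the final logarithm: since $\log$ is not subadditive for small arguments, passing from $\log\!\bigl(H_{\max}^{{\varepsilon'}^2/2}(A)_\rho+2\log(2/{\varepsilon'}^2)\bigr)$ to $\log H_{\max}^{{\varepsilon'}^2/2}(A)_\rho+\mathcal{O}(\log(1/\varepsilon'))$ needs a short case split. When $H_{\max}^{{\varepsilon'}^2/2}(A)_\rho$ is at least the additive constant $2\log(2/{\varepsilon'}^2)$, the bracket is at most twice $H_{\max}^{{\varepsilon'}^2/2}(A)_\rho$ and the step is immediate; otherwise $H_0^{{\varepsilon'}^2}(A)_\rho$ is itself only $\mathcal{O}(\log(1/\varepsilon'))$ (again by Supplemental Lemma~\ref{lem:H0Hmax}), so $\log H_0^{{\varepsilon'}^2}(A)_\rho=\mathcal{O}(\log\log(1/\varepsilon'))$ is absorbed whole into the error term --- this also disposes of the degenerate regime where $\rho_A$ is essentially pure and $\log H_{\max}^{{\varepsilon'}^2/2}(A)_\rho$ itself is small. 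Apart from this, the argument is routine juggling of smoothing balls and the triangle inequality, and the isometry refinement (the catalyst being handed back in its initial state) is inherited verbatim from Theorem~\ref{prop:anc-decoup-bcr}.
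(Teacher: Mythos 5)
Your proposal is correct and follows essentially the same route as the paper: replace $\rho$ by a nearby state whose $A$-marginal has near-minimal rank, invoke the earlier catalytic-decoupling theorem, transport all errors back to $\rho$ with the triangle inequality, and convert $H_0$ to $H_{\max}$ via Supplemental Lemma~\ref{lem:H0Hmax} at the end. The only organizational difference is that the paper folds both smoothings into a single state $\hat\rho=\Pi_A\rho'\Pi_A$ (with $\rho'$ the $I_{\max}$-optimizer) and applies the non-smooth Supplemental Theorem~\ref{thm:nonsmooth-anc-decoup-bcr}, using monotonicity of $I_{\max}$ under projections, whereas you keep the $I_{\max}$-smoothing inside Theorem~\ref{prop:anc-decoup-bcr} and handle it by a ball-inclusion argument; both work, and your explicit case split for the final $\log\bigl(H_{\max}+\mathcal{O}(\log(1/\varepsilon'))\bigr)$ step is a point the paper glosses over.
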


\begin{proof}
 	 To get the bound involving smooth entropy measures we will find a state $\hat\rho\in B_{2{\varepsilon'}}(\rho)$ such that $I_{\max}(E;A)_{\hat\rho}\le I_{\max}^{\varepsilon'}(E;A)_{\rho}$ and $H_0(A)_{\hat\rho}\le H^{{\varepsilon'}^2/2}_0(A)_{\rho}$. Let $\rho'_{AE} \in B_{\varepsilon'}(\rho_{AE})$ such that $I_{\max}(E;A)_{\rho'}=I_{\max}^{\varepsilon'}(E;A)_{\rho}$. Let $\Pi_A$ be a projection of minimal rank such that $H_0^{{\varepsilon'}^2/2}(A)_{\rho}\ge H_0(A)_{\rho''}$, with $\rho''=\Pi_A \rho_{AE}\Pi_A\in B_{\varepsilon'}(\rho_{AE})$. To see why such a projection exists, note that Supplemental Lemma \ref{lem:spectr} implies that there exists a state $\rho''_A$ such that
 	 	\begin{align}\label{eq:H0bound}
 	 		H_0(A)_{\rho''_A}=H_0^{{\varepsilon'},\tr}(A)_{\rho}\le H_0^{{\varepsilon'}^2\!/2}(A)_{\rho}
 	 	\end{align}
 	 	and $[\rho_A,\rho''_A]=0$, where the inequality is due to the equivalence lemma \ref{lem:equiv} of the trace distance and the purified distance. But for the case of commuting density matrices, i.e. the classical case, it is clear that the density matrix in a given trace distance neighborhood of $\rho$, that has minimal rank, is just equal to $\rho$ with the smallest eigenvalues set to zero. This implies that $\rho''_A$ can be chosen to have the form $\rho''_A=\Pi_A\rho_A\Pi_A$. It is easy to see that $P(\rho_{AE},\rho''_{AE})\le{\varepsilon'}$ where $\rho''_{AE}=\Pi_A\rho_{AE}\Pi_A$: Pick a purification $\ket{\rho''}_{AER}=\Pi_A\ket{\rho}_{AER}$ and observe that
 	 	\begin{align*}
 	 		F(\rho_{AE},\rho''_{AE})=\max_{\ket\sigma_{AER}}\left|\braket{\sigma}{\rho''}\right|=\max_{\ket\sigma_{AER}}\left|\bra{\sigma}\Pi\ket{\rho}\right|=\tr\Pi\rho=F(\rho_A,\rho''_A),
 	 	\end{align*}
 	 	where the fist equation is Uhlmann's theorem and the third equation follows from the saturation of the Cauchy-Schwarz inequality. We also use that $[\rho_A,\rho''_A]=0$ in the last equation. Now, we define $\hat\rho=\Pi_A\rho'\Pi_A$ and bound
 	 	\begin{align}\label{eq:annoying-dist-ineq}
 	 	P(\hat\rho_{AE},\rho_{AE})=P(\Pi_A\rho'_{AE}\Pi_A,\rho_{AE})= P( \rho'_{AE}, \Pi_A\rho_{AE}\Pi_A)=P( \rho'_{AE}, \rho''_{AE})\le P(\rho'_{AE}, \rho_{AE})+P(\rho_{AE}, \rho''_{AE})\le 2{\varepsilon'}.
 	 	\end{align}
 	 	The second equation follows easily by Uhlmann's theorem. According to \cite[Lemma B.19]{berta2011quantum} the max-mutual-information decreases under projections, i.e. we have
 	 	\begin{align*}
 	 		I_{\max}(E;A)_{\hat\rho}\le I_{\max}(E;A)_{\rho'}=I_{\max}^{\varepsilon'}(E;A)_{\rho}.
 	 	\end{align*}
 	 	Our choice of $\Pi_A$ gives
 	 	\begin{align*}
 	 		H^{{\varepsilon'}^2/2}_0(A)_{\rho}\ge H_0(A)_{\rho''_A}=\log\mathrm{rk}\Pi_A\ge H_0(A)_{\hat\rho},
 	 	\end{align*}
 	 	where the first inequality is Equation \eqref{eq:H0bound}.
 	 	Now, we apply Supplemental Theorem \ref{thm:nonsmooth-anc-decoup-bcr}, to $\hat\rho_{AE}$. Let $\rho^{(f)}_{A_1A_2E}$ be the final state when applying the resulting protocol to $\rho_{AE}$. Then, we get
 	 	\begin{align*}
		 	P\left(\rho^{(f)}_{A_1E},\sigma_{A'}\otimes\proj 0_{\tilde A_1}\otimes\rho_E\right)\le 6{\varepsilon'}
 	 	\end{align*}
 	 	by using Equations \eqref{eq:annoying-dist-ineq}, \eqref{eq:non-smooth-decoup} , the triangle inequality and the monotonicity of the purified distance under CPTP maps. 	 	
 \end{proof}

%-----------------------------------------------------------------------------------------------------------------------------------------------------------------------------------------------------------------------------

\subsubsection{Standard decoupling and comparison}

Let us look at an example of a state where the smooth min-entropy is almost zero and the smooth max-entropy is almost maximal to illustrate the significance of the randomization condition that is usually demanded for standard decoupling. To bound the max-entropy in the following example we need

\begin{slem}\label{lem:smoothing-flat-states}
    Let $0 < q < 1$ and $0 < \varepsilon^2 < 1-\sqrt{1-q}$, and let $\ket{\Phi}_{AB}$ be a maximally entangled state with $\dim A = \dim B = d$. Then, we have that
    \[ \hmin^{\varepsilon}(A|B)_{q\Phi} \leq -\log d + \log {\frac{1}{1-\varepsilon^2 - \sqrt{1-q}}}. \]
\end{slem}
\begin{proof}
    We can modify the SDP for the smooth min-entropy from~\cite[Proof of Lemma 5]{vitanov2012chain} to work with subnormalized states, by adding an extra dimension. The result is that given a state $\rho_{AB}$ with $\tr[\rho]=p$, the value of the following SDP is $2^{-\hmin^{\varepsilon}(A|B)_{\rho}}$:
\begin{center}
    \textbf{Primal problem:}
\end{center}
\[
    \begin{array}{ll}
        \mbox{minimize} & \tr[\sigma_B]\\[.2cm]
        \mbox{subject to} & \tilde{\rho}_{AB} \leq \mathds{1}_A \otimes \sigma_B\\
            &\tr\left[X \begin{pmatrix} \rho_{ABC} & \sqrt{1-p}\ket{\rho}\\ \sqrt{1-p} \bra{\rho} & 1-p \end{pmatrix} \right] \geq 1-\varepsilon^2\vspace{.1cm}\\
            &\tr[X] \leq 1\vspace{.1cm}\\
            & X = \begin{pmatrix} \tilde{\rho}_{ABC} & \ket{\psi}\\ \bra{\psi} & x \end{pmatrix}.
    \end{array}
\]
\vspace{.5cm}
\begin{center}
    \textbf{Dual problem:}
\end{center}
\[
    \begin{array}{ll}
        \mbox{maximize} & (1-\varepsilon^2)\mu - \lambda\\[.2cm]
        \mbox{subject to} & \mu \begin{pmatrix} \rho_{ABC} & \sqrt{1-p}\ket{\rho}\\ \sqrt{1-p} \bra{\rho} & 1-p \end{pmatrix} \leq \begin{pmatrix} E_{AB} \otimes \mathds{1}_C & 0 \\ 0 & 0 \end{pmatrix} + \begin{pmatrix} \lambda\mathds{1}_{ABC} & 0 \\ 0 & \lambda \end{pmatrix}\vspace{.1cm}\\
        & \tr_A[E_{AB}] \leq \mathds{1}_B.
    \end{array}
\]
\vspace{.5cm}

In the above, $\ket{\rho}_{ABC}$ is some fixed purification of $\rho_{AB}$. 

Now, to get the bound for $\rho = q \Phi$, we can choose $\mu = d$, $E_{AB} = d \Phi_{AB}$, and $\lambda = d\sqrt{1-q}$. The value of the dual problem for this choice of variables is then $d(1-\varepsilon^2) - d\sqrt{1-q}$. This is therefore a lower bound on $2^{-\hmin^\varepsilon(A|B)_{q\Phi}}$ and concludes the proof.
\end{proof}

\begin{ex}\label{ex:trivE}
	Define a probability distribution on $\{0,1,...,n\}$ by $p(0)=p_0$ and $p(i)=\frac{1-p_0}{n}$ for $i\neq 0$. Supplemental Lemma \ref{lem:equiv} shows that $H_{\min}^\varepsilon(p)\le H_{\min}^{\varepsilon(p),\tr}(p)$, where the superscript $\tr$ indicates that the non-smooth quantity is optimized over the trace distance ball instead of the purified distance ball. Considering that the min- and max-entropy are functions of the spectrum we can optimize over probability distributions only. The non-smooth min-entropy of $p$ is $H_{\min}(p)=-\log p_0$. Assume $p_0(1-\varepsilon+1/n)-1/n\ge 0$. Then, the best we can do for increasing this is obviously to reduce the probability of the outcome $0$. Take a sub-normalized probability distribution $q$ with $q(0)=q_0$ and $q(i)=p(i),\ i>0$. Then, we have $\delta(p,q)=p_0-q_0$, i.e. by Supplemental Lemma \ref{lem:uncond}
	\begin{align*}
	H_{\min}^{\varepsilon,\tr}(p)=-\log(p_0-\varepsilon),
	\end{align*}
	Assuming $\varepsilon\le p_0-(1-p_0)/n$.
	
	Using Supplemental Lemma \ref{lem:smoothing-flat-states} we get, assuming $\varepsilon^2\le 1-\sqrt{p_0}$, that
	
	\begin{align*}
			H_{\max}^{\varepsilon}(p)&\ge H_{\max}^{\varepsilon}((1-p_0)U(n))= -H_{\min}^\varepsilon(A|B)_{(1-p_0)\Phi}\ge \log n - \log {\frac{1}{1-\varepsilon^2 - \sqrt{p_0}}},
		\end{align*}
	where $U(n)$ denotes the uniform distribution on $n$ symbols.
	Putting in $p_0=1/2$ and $\varepsilon<1/15$ yields, after some calculations,
	\begin{align*}
		H_{\max}^{\varepsilon}(p)-H_{\min}^{\varepsilon}(p)&\ge H_{\max}^{\varepsilon}(p)-H_{\min}^{\varepsilon,\tr}(p)\ge\log(n)-\log\left(\frac{1}{(1-\sqrt{p_0}-\varepsilon^2)(p_0-\varepsilon)}\right)\\
		&\ge \log n-\log\left(\frac{10}{1-15\varepsilon}\right).
	\end{align*}
	\vspace{.1cm}
\end{ex}

The next theorem is a one-shot decoupling theorem for the partial trace with a bound on the remainder system involving smooth entropies. Plugging in the partial trace map into Theorem 3.1 in \cite{dupuis2014one} yields a priori the non-smooth $\log|A_2|\ge\frac 1 2\left(\log|A|-H_{\min}^{\varepsilon}(A|E)\right)$ for the remainder system when decoupling $A$ from $E$ in a state $\rho_{AE}$ despite the smoothness of the term depending on the map. This can be understood considering the fact that the Choi-Jamio\l kowski state of the partial trace is a tensor product of states with flat marginals, such that smoothing doesn't change much. For convenience we use~\cite[Theorem 3.1]{berta2011quantum} as a basic decoupling theorem.

\begin{sthm}
	Let $\rho_{AE}$ be a bipartite quantum state, and let $\hi_A\cong\hi_{A_1}\otimes\hi_{A_2}$ such that
	\begin{align*}
		\log|A_2|\ge \frac 1 2\left(H_{\max}^{\varepsilon}(A)_\rho-H_{\min}^\varepsilon(A|E)_\rho\right)-3\log\frac 1 \varepsilon.
	\end{align*}
	Then, we have
	\begin{align*}
		\intop_{\mathrm{U}(\hi_A)}P\left(\tr_{A_2}\left(U_A\rho_{AE}U_A^\dagger\right),\frac{1_{A_1}}{|A_1|}\otimes\rho_{E}\right)\D U_A\le \left(2+\sqrt{7}\right)\varepsilon\le 5 \varepsilon.
	\end{align*}
\end{sthm}

\begin{proof}
	Let $\tilde{\rho}_{AE}\in B_\varepsilon(\rho_{AE})$ such that $H_{\min}^\varepsilon(A|E)_\rho=H_{\min}(A|E)_{\tilde\rho}$, $\bar{\rho}_{A}\in B_\varepsilon(\rho_{A})$ such that $H_{0}^\varepsilon(A)_\rho=H_{0}(A)_{\bar\rho}$ and $\Pi_A$ the projection onto the support of $\bar\rho_A$.  Define the state $\hat{\rho}_{AE}=\Pi_A\tilde{\rho}_{AE}\Pi_A$. By Supplemental Lemma \ref{lem:lowrk} we can assume that $\bar{\rho}=\Pi_A\rho\Pi_A/\tr\rho$. $\rho$ is normalized, so a short calculation shows that
	\begin{align}\label{eq:trbound}
		F(\rho,\Pi_A\rho\Pi_A/\tr(\Pi_A\rho))=\sqrt{\tr(\Pi_A\rho)}\quad\Rightarrow\quad\tr(\Pi_A\rho)\ge 1-\varepsilon^2
	\end{align}
	via the definitions of $H_0^\varepsilon$ and the purified distance.
	 By the triangle inequality we get
	\begin{align}\label{eq:bound1}
		P(\rho,\hat{\rho})\le P(\rho,\bar{\rho})+P(\bar{\rho},\hat{\rho})\le\varepsilon+P(\Pi\tilde\rho\Pi, \Pi\rho\Pi/\tr(\Pi\rho)).
	\end{align}
	We continue to bound the last term. We have
	\begin{align*}
		F(\Pi\tilde\rho\Pi, \Pi\rho\Pi/\tr(\Pi\rho))&=\frac{1}{\sqrt{tr(\Pi\rho)}}\left\|\sqrt{\Pi\tilde\rho\Pi}\sqrt{\Pi\rho\Pi}\right\|_1\nonumber\\
		&=\frac{1}{\sqrt{tr(\Pi\rho)}}\left(F(\Pi\tilde\rho\Pi, \Pi\rho\Pi)-\sqrt{(1-\tr(\Pi\rho))(1-\tr(\Pi\tilde\rho))}\right)\nonumber\\
		&\ge\left(F(\tilde\rho, \rho)-\sqrt{(1-\tr(\Pi\rho))(1-\tr(\Pi\tilde\rho))}\right).
	\end{align*}
	The last step, i.e. that the generalized fidelity does not decrease under projections, follows easily from the fact that the regular fidelity does not decrease under CPTP maps. To bound the remaining term, note that
	\begin{align*}
		\sqrt{(1-\tr(\Pi\rho))(1-\tr(\Pi\tilde\rho))}+\sqrt{\tr(\Pi\rho)\tr(\Pi\tilde\rho)}\ge F(\rho,\tilde\rho)\ge \sqrt{1-\varepsilon^2}
	\end{align*}
	by the monotonicity of the fidelity under CPTP maps. Let $\phi, \theta\in[0,\pi/2]$ such that $\cos^2\phi=\tr(\Pi\rho)$ and $\cos^2\theta=\tr(\Pi\tilde{\rho})$. Then, some trigonometric identities yield $\sin(\phi-\theta)\le\varepsilon$, i.e. in particular $\phi\le\theta+\arcsin(\varepsilon)$. Using this bound, Equation \eqref{eq:trbound} and some more trigonometry yields
	\begin{align*}
		\sqrt{(1-\tr(\Pi\rho))(1-\tr(\Pi\tilde\rho)}=\sin\phi\sin\theta\le3\varepsilon^2\sqrt{1-\varepsilon^2}.
	\end{align*}
	This implies now that
	\begin{align*}
		F(\Pi\tilde\rho\Pi, \Pi\rho\Pi/\tr(\Pi\rho))\ge\sqrt{1-\varepsilon^2}(1-3\varepsilon^2)\quad\Rightarrow\quad P(\Pi\tilde\rho\Pi, \Pi\rho\Pi/\tr(\Pi\rho))\le \sqrt{7\varepsilon^2-15\varepsilon^4+9\varepsilon^6}\le\sqrt{7}\varepsilon.
	\end{align*}
	Together with Equation \eqref{eq:bound1} this yields $P(\rho,\hat{\rho})\le(1+\sqrt{7})\varepsilon$. Considering $\hat{\rho}\in\St(\supp\bar\rho_A\otimes\hi_E)$, an application of~\cite[Theorem 3.1]{berta2011quantum} together with Supplemental Lemma \ref{lem:equiv} results in the following. If $A=A_1A_2$ and 
	\begin{align}\label{eq:entrcond}
		\log|A_2|\ge \frac{1}{2}\left(\log\mathrm{rk}(\bar\rho_A)-H_{\min}(A|E)_{\hat{\rho}}\right)-\log\frac 1 \varepsilon,
	\end{align}
	then we have
	\begin{align*}
		\intop_{\mathrm{U}(\hi_A)}P\left(\tr_{A_2}\left(U_A\hat\rho_{AE}U_A^\dagger\right),\frac{1_{A_1}}{|A_1|}\otimes\hat\rho_{E}\right)\D U_A\le \varepsilon.
	\end{align*}
	The last equation implies, together with the triangle inequality, that 
		\begin{align*}
				\intop_{\mathrm{U}(\hi_A)}P\left(\tr_{A_2}\left(U_A\rho_{AE}U_A^\dagger\right),\frac{1_{A_1}}{|A_1|}\otimes\hat\rho_{E}\right)\D U_A\le \left(2+\sqrt{7}\right)\varepsilon.
			\end{align*}
	Equation \eqref{eq:entrcond} together with Supplemental Lemma \ref{lem:hminproj} and \ref{lem:H0Hmax} implies the claimed bound on the remainder system size.
\end{proof}

In the following we present a correction of the converse for decoupling by CPTP map, Corollary 4.2, from \cite{dupuis2014one}, a slightly tighter version of Proposition 4.

In the context of decoupling by partial trace we observed that it makes a big difference whether we demand that the decoupled system is randomized as well, i.e. that it is left in the maximally mixed state. This stops making sense in the context of decoupling by a general CPTP map $\mathcal{T}$, as the maximally mixed state might not even be in the \emph{range} of $\mathcal{T}$. Instead one can demand randomizing in the sense that is achieved in the direct result in \cite{dupuis2014one}, i.e. $\mathcal T_{A\to B}\left(\rho_{AE}\right)\approx\mathcal T_{A\to B}\left(\frac{1_A}{|A|}\right)\otimes\rho_E$.

The following theorem from \cite{dupuis2014one} is already a converse statement for decoupling by CPTP map.

\begin{sthm}\cite[Theorem 4.1]{dupuis2014one}\label{thm:cptp-decoup-con-omega}
	 Let $\rho\in\mathcal{S}(\hi_A\otimes\hi_E)$ and $\mathcal{T}: \End{\hi_A}\to\End{\hi_B}$ a CPTP map such that
	 \begin{align*}
	 	\left\|\mathcal{T}\otimes\mathrm{id}_E(\rho_{AE})-\mathcal{T}(\rho_A)\otimes\rho_E\right\|_1\le \varepsilon.
	 \end{align*}
	 Then, we have
	 \begin{align*}
	 	H_{\min}^{2\sqrt{6\varepsilon''+2\varepsilon}+2\sqrt{\varepsilon'}+\varepsilon''}(A|E)_{\rho}+H_{\max}^{\varepsilon''}(A|B)_{\omega}\ge \log\varepsilon'
	 \end{align*}
	 for all $\varepsilon', \varepsilon''>0$, where $\omega_{AB}=\mathrm{id}_A\otimes\mathcal{T}_{A'\to B}(\rho_{AA'})$ with $\hi_{A'}\cong\hi_A$ and $\rho_{AA'}$ a purification of $\rho_A$.
\end{sthm}

It involves, however, the term $H_{\max}^{\varepsilon''}(A|B)_{\omega}$ that depends on both the state and the CPTP-map. Unfortunately the proof of the Corollary following this theorem, Corollary 4.2, contains a mistake and the statement is incorrect as it is stated in \cite{dupuis2014one}. The reason for this is that the converse, Corollary 4.2, does not assume decoupling \emph{and} randomizing, while the direct result, \cite[Theorem 3.1]{dupuis2014one}, provides a condition for exactly that. Adding this condition to the statement of Corollary 4.2 renders it true and we give a proof of it in the following.

\begin{customprop}{4'}[Corrected version of Corollary 4.2 in~\cite{dupuis2014one}]\label{thm:cptp-decoup-con}
		Let $\rho_{AE}\in\mathcal{S}(\hi_A\otimes\hi_E)$ and let the CPTP map $\mathcal{T}: \End{\hi_A}\to\End{\hi_B}$ be such that
	 	\begin{align*}
	 		\intop_{\mathcal{U}(\hi_A)}P\left(\mathcal T_{A\to B}(U_A\rho_{AE}U_A^\dagger),\tau_B\otimes\rho_E\right)\D U_A\le \varepsilon.
	 	\end{align*}
Then, we have
		 \begin{align*}
		 H_{\min}^{4\sqrt{6\varepsilon''+2\varepsilon}+2\varepsilon''+\varepsilon'''}(A|E)_\rho+H_{\max}^{\varepsilon''}(A|B)_{\tau}\ge-10\log\left(\frac{1}{\varepsilon'''}\right)-7
		 \end{align*}
		 for all $\varepsilon''', \varepsilon''>0$, where $\tau=\mathcal{T}_{A'\to B}(\phi^+_{AA'})$ is the Choi-Jamio\l kowski state of $\mathcal{T}$.
\end{customprop}

\begin{proof}
	For $\delta\ge 0$ arbitrary, let $\mathcal{D}\subset \mathcal{U}_A$, $|\mathcal{D}|<\infty$ be a $\delta$-net in $\mathcal{U}_A$ in the operator norm, i.e. a finite subset such that for all $U\in \mathcal{U}_A$ there exists $V\in\mathcal{D}$ such that $\|U-V\|_\infty\le\delta$. Now, define the state
	\begin{align*}
		\tilde{\rho}_{AEU}=\sum_{U_A\in\mathcal{D}}q_{{}_{U_A}}U_A\rho_{AE}U_{A}^\dagger\otimes\proj{U_A}_U
	\end{align*}
	where $\hi_U=\C^{|\mathcal D|}$,
	\begin{align*}
		q_{{}_{U_A}}=\mu\left(\left\{U\in\mathcal{U}_A\Big| \|U-U_A\|_\infty\le\|U-V\|_\infty\,\forall V\in\mathcal{D}\right\}\right),
	\end{align*}
	and $\mu$ denotes the Haar measure.
	The assumption implies that $\mathcal T_{A\to B}$ decouples $A$ from $EU$. To see this, note that
	\begin{align} \label{eq:deltanet}
		P\left(\mathcal T_{A\to B}\tilde{\rho}_{AEU},\mathcal T_{A\to B}\tilde{\rho}_{A}\otimes\tilde{\rho}_{EU}\right)
		&=\frac{1}{|\mathcal{D}|}\sum_{U_A\in\mathcal{D}}P\left(\mathcal{T}_{A\to B}^{U_A}\rho_{AE},\mathcal{T}_{A\to B}\tilde{\rho}_A\otimes\rho_E\right)\nonumber\\
		&=\intop_{\mathcal{U}(\hi_A)}P\left(\mathcal{T}_{A\to B}^{D(U_A)}\rho_{AE},\mathcal{T}_{A\to B}\tilde{\rho}_A\otimes\rho_E\right)\D U\nonumber\\
		&\le\intop_{\mathcal{U}(\hi_A)}P\left(\mathcal{T}_{A\to B}^{U_A}\rho_{AE},\mathcal{T}_{A\to B}\tilde{\rho}_A\otimes\rho_E\right)\D U+\sqrt{2\delta},
	\end{align}
	where $\mathcal{T}_{A\to B}^{U_A}: X\mapsto \mathcal{T}_{A\to B}(U_AXU_A^\dagger)$ and $D(U_A)=\mathrm{minarg}_{V\in\mathcal{D}}\|U_A-V\|_\infty$. The last inequality follows easily using the $\delta$-net-property, the triangle inequality, the fact that the purified distance decreases under CPTP maps and Supplemental Lemma \ref{lem:smallU}. By assumption we then have
	\begin{align*}
			P\left(\mathcal T_{A\to B}\tilde{\rho}_{AEU},\mathcal T_{A\to B}\tau_A\otimes\tilde{\rho}_{EU}\right)\le\varepsilon+2\sqrt{2\delta},
		\end{align*}
	as $P(\mathcal T_{A\to B}\tau_A, \mathcal{T}_{A\to B}\tilde{\rho}_A)\le\sqrt{2\delta}$ by a similar argument as in Equation \ref{eq:deltanet}.
	Using Supplemental Lemma \ref{lem:equiv} and applying Supplemental Theorem \ref{thm:cptp-decoup-con-omega} to this situation, i.e. the map $\mathcal{T}_{A\to B}$ that decouples system $A$ from $EU$ applied to the state $\tilde{\rho}$, we get
	\begin{align}\label{eq:from-convlem}
		H_{\min}^{2\sqrt{6\varepsilon''+2\varepsilon+8\delta}+2\sqrt{\varepsilon'}+\varepsilon''}(A|EU)_{\tilde{\rho}}+H_{\max}^{\varepsilon''}(A|B)_{\tau}\ge-\log\left(\frac{1}{\varepsilon'}\right),
	\end{align}
	whit the Choi-Jamio\l kowski state $\tau_{AB}=\mathcal{T}_{A'\to B}\phi^+_{AA'}$. Let $\eta=2\sqrt{6\varepsilon''+4\varepsilon}+2\sqrt{\varepsilon'}+\varepsilon''$. The min-entropy term can be transformed using the chain rules for smooth entropies \cite{vitanov2012chain},
	\begin{align}\label{eq:chainrules}
	H_{\min}^{\eta}(A|EU)_{\tilde{\rho}}
	&\le H^{2\eta+\varepsilon^{(3)}}_{\min}(AU|E)_{\tilde{\rho}}-H_{\min}(U|E)_{\tilde{\rho}}+\log\left(\frac{2}{(\varepsilon^{(3)})^2}\right)\nonumber\\
	&= H^{2\eta+\varepsilon^{(3)}}_{\min}(AU|E)_{\rho\otimes\tau_U}-H_{\min}(U)_{\tau_U}+\log\left(\frac{2}{(\varepsilon^{(3)})^2}\right)\nonumber\\
	&\le H^{2\eta+\varepsilon^{(3)}+2\varepsilon^{(4)}}_{\min}(A|E)_{\rho}+H_{\max}(U|AE)_{\rho\otimes\tau_U}-H_{\min}(U)_{\tau_U}+\log\left(\frac{2}{(\varepsilon^{(3)})^2}\right)+3\log\left(\frac{2}{(\varepsilon^{(4)})^2}\right)\nonumber\\
	&=H^{2\eta+\varepsilon^{(3)}+2\varepsilon^{(4)}}_{\min}(A|E)_{\rho}+H_{\max}(U)_{\tau_U}-H_{\min}(U)_{\tau_U}+\log\left(\frac{2}{(\varepsilon^{(3)})^2}\right)+3\log\left(\frac{2}{(\varepsilon^{(4)})^2}\right)\nonumber\\
	&\le H^{2\eta+\varepsilon^{(5)}}_{\min}(A|E)_{\rho}+8\log\left(\frac{1}{\varepsilon^{(5)}}\right)+13,
	\end{align}
	where we used a chain rule in the first inequality, in the second line that $U$ is independent from $E$, the invariance of the smooth entropies under isometries and that there exists a controlled unitary $V_{UA}$ such that $V_{UA}\tilde\rho_{AEU}V_{UA}^\dagger=\rho_{AE}\otimes\tau_U$, and another chain rule in the  fourth line. In the last line we set $\varepsilon^{(5)}=2\varepsilon^{(4)}+\varepsilon^{(3)}$ and $\varepsilon^{(4)}=2\varepsilon^{(3)}/3$ to get an optimal error term. Combining Equations \eqref{eq:from-convlem} and \eqref{eq:chainrules} we get
	\begin{align*}
		H_{\min}^{4\sqrt{6\varepsilon''+2\varepsilon+4\delta}+4\sqrt{\varepsilon'}+2\varepsilon''+\varepsilon^{(5)}}(A|E)_\rho+H_{\max}^{\varepsilon''}(A|B)_{\tau}\ge-\log\left(\frac{1}{\varepsilon'}\right)-8\log\left(\frac{1}{\varepsilon^{(5)}}\right)-13.
	\end{align*}
	Fixing $\varepsilon'''=4\sqrt{\varepsilon'}+\varepsilon^{(5)}$ and optimizing the logarithmic error term yields
	\begin{align*}
	H_{\min}^{4\sqrt{6\varepsilon''+2\varepsilon+4\delta}+2\varepsilon''+\varepsilon'''}(A|E)_\rho+H_{\max}^{\varepsilon''}(A|B)_{\tau}\ge-10\log\left(\frac{1}{\varepsilon'''}\right)-7.
	\end{align*}
	As $\delta$ was arbitrary, we can take the limit\footnote{The limit $\delta\to 0$ exists, as the min-entropy term that depends on $\delta$ is nondecreasing in $\delta$ and bounded from below.} $\delta\to 0$, which concludes the proof.
\end{proof}

Theorem 3 follows as an easy corollary.

\begin{customprop}{4}
	Let $\rho_{AE}\in\mathcal{S}(\hi_A\otimes\hi_E)$ and let the CPTP map $\mathcal{T}: \End{\hi_A}\to\End{\hi_B}$ be such that
	 	\begin{align*}
	 		\intop_{\mathcal{U}(\hi_A)}P\left(\mathcal T_{A\to B}(U_A\rho_{AE}U_A^\dagger),\mathcal T_{A\to B}(\tau_{A})\otimes\rho_E\right)\D U_A\le \varepsilon.
	 	\end{align*}
Then, we have
\begin{align*}
H_{\min}^{15\sqrt{\varepsilon}}(A|E)_\rho+H_{\max}^{\varepsilon}(A|B)_{\tau}\ge -10\log\left(\frac{1}{\varepsilon}\right)-7,
\end{align*}
where $\tau=\mathcal{T}_{A'\to B}(\phi^+_{AA'})$ is the Choi-Jamio\l kowski state of $\mathcal{T}$.
\end{customprop}

\begin{proof}
Setting $\varepsilon=\varepsilon'=\varepsilon''=\varepsilon'''$ in Theorem \ref{thm:cptp-decoup-con} and bounding $\varepsilon\le\sqrt{\varepsilon}$ yields the result.
\end{proof}

%-----------------------------------------------------------------------------------------------------------------------------------------------------------------------------------------------------------------------------

\subsubsection{Asymptotic expansion}

Here we give the necessary definitions and point to the relevant references to derive the asymptotic expansion given in Equation (15) in the main paper.

\begin{sdefn}[Quantum relative entropy \cite{umegaki1962conditional} and quantum information variance \cite{tomamichel2013hierarchy}]
	For quantum states $\rho, \sigma\in\St(\hi)$ the quantum relative entropy is defined as follows:
	\begin{align*}
		D(\rho\|\sigma)=\begin{cases}
		\tr\rho\left(\log\rho-\log\sigma\right)&\ \supp\rho\subset\supp \sigma\\
		\infty& \mathrm{ else}
		\end{cases},
	\end{align*}
	i.e. as the expectation of $\log\rho-\log\sigma$ with respect to $\rho$.
	The quantum information variance is the corresponding variance,
	\begin{align*}
		V(\rho\|\sigma)=\begin{cases}
					\tr\rho\left(\log\rho-\log\sigma\right)^2&\ \supp\rho\subset\supp \sigma\\
					\infty& \mathrm{ else}
				\end{cases}.
	\end{align*}
\end{sdefn}

The von Neumann entropy and derived quantities can be expressed in terms of the quantum relative entropy and thereby given a corresponding variance. In particular we have that
\begin{align*}
I(A;B)_\rho=D(\rho_{AB}\|\rho_A\otimes\rho_B).
\end{align*}
Consequently we define $V(A;B)_\rho=V(\rho\|\rho_A\otimes\rho_B)$.

Equation (15) in the main paper makes use of the cumulative normal distribution,
\begin{align*}
\Phi(x)=\frac{1}{\sqrt{2\pi}}\int_{-\infty}^{x}\exp(-y^2/2)\D y.
\end{align*}
Note that this function is invertible.

The derivation of the asymptotic expansion is not detailed here, as it is completely analogous to the derivation in \cite[Section VI]{tomamichel2013hierarchy}.

%-----------------------------------------------------------------------------------------------------------------------------------------------------------------------------------------------------------------------------

\subsubsection{Quantum state redistribution from catalytic decoupling}

In this section we show how to apply any decoupling with ancilla protocol to quantum state redistribution. Let us first define the task of quantum state redistribution (QSR).

\begin{sdefn}[Quantum state redistribution \cite{yard2009optimal,horodecki2007quantum}]
\textcolor{white}{placeholder}
\begin{itemize}
\item Let $\ket{\psi}_{ABCR}$ be a four party quantum state where Alice holds systems $A$ and $C$, Bob holds System $B$ and a referee holds system $R$. An \emph{$\varepsilon$-quantum state redistribution protocol with communication cost $q$} is a protocol in which Alice performs some encoding operation on here shares of $\ket{\psi}_{ABCR}$ and some resource state $\ket{\phi}_{A'B'}$ shared between Alice and Bob, then she sends a quantum register $C_2$ of size $\log|C_2|=q$ to Bob who performs some decoding operation such that the final state is $\ket{\tilde{\psi}}_{ABCR}\otimes\ket{\tilde\phi}_{A''B''}$ with $P(\ket{\psi}_{ABCR},\ket{\tilde\psi}_{ABCR})\le\varepsilon$ and Alice holds $A$, Bob holds $B$ and $C$ and the referee still holds $R$.
\item For trivial system $A$, i.e. $\hi_A=\C$, the task is called \emph{quantum state merging}, for $\hi_B=\C$ \emph{quantum state splitting}.
\end{itemize}
\end{sdefn}

Asymptotically QSR can be achieved with a quantum communication cost of $I(R;C|A)_\psi$ \cite{yard2009optimal}, i.e. there exists a sequence of QSR protocols for $\ket{\psi}_{ABCR}^{\otimes n}$ with quantum communication cost $q_n$ such that
\begin{align*}
\lim_{n\to\infty}\frac{1}{n}q_n=I(R;C|A)_\psi.
\end{align*}

Anshu \etal~\cite{anshu2014near} define the following quantity that that characterizes the quantum communication cost of one-shot QSR:
\begin{align*}
I_{\max}^\varepsilon(R;C|A)_{\psi}=\inf I_{\max}(RA;CA')_{U_{ACA'}\rho_{RACA'} U^\dagger_{ACA'}},
\end{align*}
where the infimum is taken over ancilla systems $A'$, states $\sigma_{A'}$, states $\rho\in B_\varepsilon(\psi_{RAC}\otimes\sigma_{A'})$ and unitaries $U_{ACA'}\in\mathcal{U}(\hi_{ACA'})$ such that $\tr_{CA'}U_{ACA'}\rho_{RACA'} U^\dagger_{RACA'}\in B_\varepsilon(\psi_{RA})$. We call this quantity the smooth conditional max-mutual-information. Note that in \cite{anshu2014near} it is denoted by $Q^\varepsilon_\psi$. Using the same minimization idea we can get a QSR protocol, that improves over the naive use of a state splitting or state merging protocol to achieve QSR, from any decoupling theorem. The special case of state merging is presented as Proposition 3 in the main text.

\begin{sthm}[Quantum state redistribution from decoupling]\label{thm:decouptoqsr}
Quantum state redistribution for a state $\ket{\psi}_{ABCR}$ can be achieved up to a purified distance error of $3\varepsilon$ with a quantum communication cost of
\begin{align}\label{eq:q-quantity}
q^\varepsilon(R;C|A)_\psi=\inf R_c^\varepsilon(AR;CA'')_{U_{ACA''}\rho_{ARCA''}U_{ACA''}^\dagger},
\end{align}
where the infimum is taken over ancilla systems $A''$, states $\sigma_{A''}$, states $\rho_{ARCA''}\in B_\varepsilon(\psi_{RAC}\otimes\sigma_{A''})$ and unitaries $U=U_{ACA''}\in\mathcal{U}(\hi_{ACA''})$ such that $\tr_{CA''}U_{ACA''}\rho_{RACA''} U^\dagger_{ACA''}\in B_\varepsilon(\psi_{RA})$. For state merging the quantity $q^\varepsilon(R;C|A)_\psi$ reduces to $R_c^\varepsilon(R;C)$.
\end{sthm}

The problem that the infimum is taken over unbounded Hilbert space dimensions, and therefore might not be achievable using a finite-dimensional Hilbert space, is artificial in view of the fact that any one-shot protocol has a communication cost $q\in\log\N$, which is discrete, i.e. the infimum is actually a minimum.

The proof is an adaptation of the protocol used in \cite{anshu2014near}, Theorem 4.2, run backwards.
\begin{proof}
Let $\hi_{A''}$, $\sigma_{A''}\in\St(\hi_{A''})$, $\rho\in B_\varepsilon(\psi_{RAC}\otimes\sigma_{A''})$ and $U_{AC{A''}}\in\mathcal{U}(\hi_{AC{A''}})$ be a tuple that saturates the infimum in Equation \eqref{eq:q-quantity}. If such does not exist because the infimum is taken over unbounded finite Hilbert space dimensions, take a tuple that saturates the infimum up to $\varepsilon$. Now, consider the following protocol:
\begin{enumerate}
		\item Starting point of the protocol is that Alice, Bob and the Referee share a state $\psi_{ABCD}\otimes\sigma_{A''B''}\otimes\tilde{\rho}_{A'B'}$, where $\sigma_{A''B''} $ is a purification of $\sigma_{A''}$, $\tilde\rho_{A'B'}$ is a purification of any state $\tilde\rho_{A'}$ that Alice will need for decoupling, and Alice holds systems $AC{A''}{A'}$, Bob holds systems $B{B''}{B'}$ and the Referee holds $R$.
		\item Alice applies the unitary $U_{AC{A''}}$
		\item Alice takes the $\varepsilon$-decoupling isometry $V_{C{A''}{A'}\to C_1C_2}$ that was constructed for decoupling systems $C{A''}A'$ of the state $U_{AC{A''}}\rho_{RAC{A''}} U^\dagger_{AC{A''}}$ from $AR$ and runs it on her state $U_{AC{A''}}\left(\psi_{RAC}\otimes \sigma_{A''}\otimes\tilde{\rho}_{A'}\right) U^\dagger_{AC{A''}}.$ She then sends the $q^\varepsilon(R;C|A)_\psi$-qbit remainder system $C_2$ to Bob. The decoupling isometry with these properties exists by assumption.
		\item For Alice's and the Referee's joint state 
		\begin{align*}
		\xi_{C_1AR}=\tr_{C_2}VU\left(\psi_{RAC}\otimes \sigma_{A''}\otimes\tilde\rho_{A'}\right) U^\dagger V^\dagger
		\end{align*}
		the triangle inequality for the purified distance yields $P(\xi_{C_1AR},\xi_{C_1}\otimes\psi_{AR})\le3\varepsilon$. So according to Uhlmann's theorem Bob can apply an isometry such that the final state of the protocol is $3\varepsilon$-close to $\psi_{ABCR}\otimes \xi'_{C_1C_1'}$ in purified distance, where $\xi'_{C_1C_1'}$ is a purification of $\xi_{C_1}$.
\end{enumerate}
For state merging, i.e. the case of trivial $A$, the ancilla $A''$ becomes unnecessary and the unitary $U_{AC{A''}}$ can be taken to be equal to the identity. This yields the claimed improvement.
\end{proof}

Together with Theorem \ref{thm:anc-decoup-supp} this recovers the result from \cite{anshu2014near} that one-shot quantum state redistribution is achievable with a communication cost of $I_{\max}^\varepsilon(R;C|A)_{\psi}$ plus lower order terms.

%-----------------------------------------------------------------------------------------------------------------------------------------------------------------------------------------------------------------------------

\end{document}